\documentclass[pra,twocolumn,showpacs,groupedaddress,notitlepage,superscriptaddress,floatfix,nofootinbib,longbibliography]{revtex4-1}

\usepackage{soul}

\usepackage{array}
\usepackage{float} 
\usepackage{rotating}

\renewcommand{\ne}{ {n^{}_{\rm E}} }


\newcommand{\nb}{ {n^{}_B} }

\newcommand{\ns}{{n^{}_S}}

\newcommand{\na}{{d^{}_A}}

\newcommand{\Is}{I_S}

\newcommand{\Ib}{I_B}

\newcommand{\Ub}{\bar{U}}

\newcommand{\btab}{\begin{tabular}}
\newcommand{\etab}{\end{tabular}}

\newcommand{\ket}[1]{\mbf{|}#1\mbf{\rangle}}
\newcommand{\bra}[1]{\mbf{\langle}#1\mbf{|}}
\newcommand{\braket}[2]{\mbf{\langle}#1\mbf{|}#2\mbf{\rangle}}

\newcommand{\normtwo}[1]{\norm{#1}_{\rm 2}}

\newcommand{\rank}{{\bf rank}}

\newcommand{\trace}{{\bf Tr}}
\newcommand{\avg}{{\bf E}}



\newcommand{\diag}{{\rm diag}}




\newcommand{\eps}{\varepsilon}

\newcommand{\gam}{\gamma}
\newcommand{\alf}{\alpha}
\newcommand{\om}{\omega}
\newcommand{\lam}{\lambda}
\newcommand{\bet}{\beta}
\newcommand{\del}{\delta}
\newcommand{\sig}{\sigma}

\newcommand{\Del}{\Delta}
\newcommand{\Gam}{\Gamma}
\newcommand{\Om}{\Omega}


\newcommand{\norm}[1]{ \left\| #1 \right\| }

\newcommand{\abs}[1]{\left|#1\right|}




\newcommand{\Lcal}{ {\mathcal L} }




\newcommand{\eg}{\emph{e.g.}}
\newcommand{\ie}{\emph{i.e.}}
\newcommand{\etc}{\emph{etc.}}


\newcommand{\bquem}{\begin{quote}\begin{em}}
\newcommand{\equem}{\end{em}\end{quote}}

\newcommand{\blist}{\begin{description}}
\newcommand{\elist}{\end{description}}

\newcommand{\bquote}{\begin{quote}}
\newcommand{\equote}{\end{quote}}

\newcommand{\ben}{\begin{enumerate}}
\newcommand{\een}{\end{enumerate}}

\newcommand{\bit}{\begin{itemize}}
\newcommand{\eit}{\end{itemize}}

\newcommand{\bea}{\begin{array}}
\newcommand{\eea}{\end{array}}

\newcommand{\bds}{\begin{displaystyle}}
\newcommand{\eds}{\end{displaystyle}}


\newcommand{\Rbf}{{\mathbf R}}

\newcommand{\ds}{\displaystyle}

\newcommand{\mbf}[1]{\mbox{\boldmath $#1$}}

\newcommand{\refsec}[1]{\ref{sec:#1}}





\newcommand{\mathbox}[1]{
\fbox{$\ds #1 $}
}



\makeatletter
\def\beq{\@ifnextchar 
[{\@tempswatrue\@beq}{\@tempswafalse\@beq[]}}
\def\@beq[#1]{\begin{equation}\edef\@tmparg{#1}\ifx\@tmparg\@e
mpty \else
	\label{#1}\fi}
\newcommand{\eeq}{\end{equation}}
\makeatother 

\newcommand{\beqaa}{\begin{eqnarray*}}
\newcommand{\eeqaa}{\end{eqnarray*}}

\newcommand{\beqa}{\begin{eqnarray}}
\newcommand{\eeqa}{\end{eqnarray}}

\newcommand{\bc}{\begin{center}}
\newcommand{\ec}{\end{center}}





\newcommand{\bfig}{\begin{figure}}
\newcommand{\efig}{\end{figure}}

\renewcommand{\normtwo}[1]{\norm{#1}_{\elltwo}}

\renewcommand{\Is}{I_n}
\renewcommand{\ns}{{n}}

\newcommand{\nq}{\nc}

\renewcommand{\na}{{n^{}_A}}

\renewcommand{\rank}{\mbox{\rm rank}}

\usepackage[colorlinks=true,
            urlcolor=blue,
            citecolor=red,
            pdfstartview=FitH,
            pdfpagemode=None]{hyperref}

\newcommand{\red}[1]{\textcolor{red}{#1}}

\usepackage{graphicx}
\usepackage[dvips]{epsfig}

\usepackage{times,amsmath,amsthm,amssymb,amsfonts,tikz}

\newtheorem{thm}{Theorem}
\newtheorem{lem}{Lemma}

\usepackage{bbm}

\renewcommand{\trace}{{\rm Tr}}
\newcommand{\tf}{{T}}
\newcommand{\Vcal}{\mathcal{V}}

\newcommand{\Mcal}{{\mathcal M}}
\newcommand{\Xcal}{{\mathcal X}}
\newcommand{\Fcal}{{\mathcal F}}

\newcommand{\Navg}{{N_{\rm avg}}}

\newcommand{\Npwc}{{N_{\rm pwc}}}

\newcommand{\Acal}{{\cal A}}

\newcommand{\mat}[1]{\left[\begin{matrix}#1\end{matrix}\right]}

\newcommand{\beasep}[1]{\renewcommand{\arraystretch}{#1}\bea}

\newcommand{\n}{d}
\renewcommand{\nb}{{d^{}_B}}
\newcommand{\qb}{{q_B}}
\renewcommand{\ns}{{d^{}_S}}
\renewcommand{\na}{{d^{}_A}}
\renewcommand{\nq}{{d^{}_Q}}

\renewcommand{\ne}{{d^{}_E}}

\newcommand{\nucnorm}[1]{\norm{#1}_{\rm nuc}}

\newcommand{\Ws}{{W_S}}

\renewcommand{\Rbf}{\mathbbm{R}}
\renewcommand{\rank}{\mbox{\bf rank}}

\newcommand{\psin}{\psi_{\rm in}}

\newcommand{\psiin}{\ket{\psi_{\rm in}}}

\newcommand{\psiinc}{\bra{\psi_{\rm in}}}

\newcommand{\Fuj}{F}

\newcommand{\Fwc}{F_{\rm wc}}
\newcommand{\Fave}{F_{\rm avg}}
\newcommand{\Favelow}{F_{\rm avg}^{\rm low}}

\usepackage{xspace}

\newcommand{\vlongrightarrow}[1]{\xrightarrow{\hspace*{#1}}}

\newcommand{\tabref}[1]{{\bf Table}~\ref{#1}\xspace}

\renewcommand{\exp}[1]{{\rm exp}\left\{#1\right\}}


\newcommand{\Fnuc}{F_{\rm nuc}}

\newcommand{\PhiB}{\Phi_B}

\newcommand{\Fnom}{\Fuj_{\rm nom}}

\newcommand{\Jrbst}{J_{\rm rbst}}
\newcommand{\Flb}{F_{\rm lb}}
\newcommand{\Flbw}{F_{\rm bnd,w}}
\newcommand{\Flba}{F_{\rm bnd,a}}
\newcommand{\Fwclow}{\Fuj_{\rm wc}^{\rm low}}
\newcommand{\Omavg}{\Om_{\rm avg}}
\newcommand{\Ommax}{\Omunc}
\newcommand{\Omunc}{\Om_{\rm unc}}
\newcommand{\Omeff}{\Om_{\rm bnd}}
\newcommand{\dOmavg}{\Om_{\rm avg}^{\rm dev}}

\renewcommand{\avg}[1]{{\big{\langle}} #1 {\big\rangle}}
\usepackage{bm}
\DeclareBoldMathCommand\boldlangle{\left\langle}
\DeclareBoldMathCommand\boldrangle{\right\rangle}

\newcommand{\Hunc}{{H_{\rm unc}}}
\newcommand{\Hcalunc}{{{\cal H}_{\rm unc}}}

\newcommand{\normsm}[1]{\left\Vert #1 \right\Vert}

\renewcommand{\Is}{I_{S}}
\newcommand{\Ia}{I_{A}}


\newcommand{\Hi}{{\widetilde H}}
\newcommand{\Hiunc}{\Hi}
\newcommand{\Ui}{{\widetilde U}}
\newcommand{\dotUi}{\dot{\Ui}}

\newcommand{\Hm}{H_{\cal M}}

\newcommand{\Hs}{{H_{S}}}

\newcommand{\Ha}{{H_{A}}}
\newcommand{\Hscoh}{{H_{S}^{\rm coh}}}
\newcommand{\Hacoh}{{\Delta_{A}^{\rm coh}}}

\newcommand{\Gsunc}{\widetilde{H}^{\rm coh}_S}
\newcommand{\Gsalf}{\widetilde{S}_{\alpha}}
\newcommand{\Gbalf}{\widetilde{B}_{\alpha}}

\newcommand{\Balf}{{B_\alf}}
\newcommand{\Salf}{{S_\alf}}
\newcommand{\Gsb}{\Hi_{SB}}
\newcommand{\Hb}{H_{B}}
\newcommand{\Hsb}{H_{SB}}
\newcommand{\Us}{U_{S}}

\renewcommand{\Ub}{U_{B}}
\newcommand{\sumalf}{\sum_{\alf}}



\renewcommand{\normtwo}[1]{\norm{#1}}

\newcommand{\normfro}[1]{\norm{#1}_{\mathrm{F}}}

\usepackage[capitalise]{cleveref}

\newcommand{\RLK}[1]{\red{[RLK: #1]}}

\def\>{\rangle}
\def\<{\langle}
\def\Tr{\mathrm{Tr}}

\newcommand{\ketbra}[1]{|{#1}\>\!\<#1|}

\newcommand{\bk}[2]{\<{#1}|{#2}|{#1}\>}

\newcommand{\ketb}[2]{|{#1}\>\!\<#2|}

\usepackage{esvect}
\usepackage[space]{grffile}

\begin{document}

\title{A Fundamental Bound for Robust Quantum Gate Control}

\author{Robert L. Kosut}
\affiliation{SC Solutions, San Jose, CA}
\affiliation{Princeton University, Princeton, NJ 08544, USA}
\affiliation{Quantum Elements, Inc., Thousand Oaks, California, USA}
\author{Daniel A. Lidar}
\affiliation{Quantum Elements, Inc., Thousand Oaks, California, USA}
\affiliation{Departments of Electrical \& Computer  Engineering, Chemistry, and  Physics \& Astronomy, Center for Quantum Information Science \& Technology, University of Southern California, Los Angeles, California 90089, USA}
\author{Herschel Rabitz}
\affiliation{Princeton University, Princeton, NJ 08544, USA}

\begin{abstract}
We derive a universal performance limit for coherent quantum control in the presence of modeled and unmodeled uncertainties.
For any target unitary $W$ that is implementable in the absence of error, we prove that the worst-case (and hence the average) gate fidelity obeys the lower bound
$F \ge \Flb\bigl(\tf \Omeff\bigr)$,
where $\tf$ is the gate duration and $\Omeff$ is a single frequency-like measure that aggregates \emph{all} bounded uncertainty sources, e.g., coherent control imperfections, unknown couplings, and residual environment interactions, without assuming an initially factorizable system-bath state or a completely positive map.
The bound is obtained by combining an interaction-picture averaging method with a Bellman-Gronwall inequality and holds for any finite-norm Hamiltonian decomposition. Hence it applies equally to qubits, multi-level qudits, and ancilla-assisted operations.
Because $\Flb$ depends only on the dimensionless product $\tf\Omeff$, it yields a device-independent metric that certifies whether a given hardware platform can, in principle, reach a specified fault-tolerance threshold, and also sets a quantitative target for robust-control synthesis and system identification.

We translate the theory into a two-objective optimization problem that minimizes both the nominal infidelity and the time-averaged error generator. As an illustrative example we consider a single-qubit Hadamard gate subject to an unknown $\sigma_z$ system-bath coupling; we obtain a five-pulse piecewise-constant control achieving a nominal error of $10^{-7}$ while virtually nulling the average disturbance.
Monte Carlo simulations confirm that every observed infidelity lies below the predicted $\Flb$ curve and that the bound is tight to within one order of magnitude in the relevant regime $1-F\lesssim10^{-4}$.
Our results provide a falsifiable benchmark for experimental characterization as well as a pathway toward error budgets compatible with scalable quantum information processing.
\end{abstract}


\maketitle

\section{Introduction}
\label{sec:intro}

Quantum processors have progressed well beyond laboratory
proofs of concept, yet they remain far from the fully
fault-tolerant regime envisioned for large-scale
computation~\cite{nisq:2018}.  Current resource estimates indicate that
the physical-to-logical qubit ratio required for fault
tolerance is still prohibitive
\cite{Gottesman:2022,Microsoft:2022,Preskill-Megaquop}.  In most
architectures the dominant cost driver is the physical two-qubit gate
error rate; reducing infidelities would translate directly into a corresponding reduction of overhead, although the exact savings depend on device specifics,
error correcting code, and layout
constraints~\cite{Xu:2024aa,Acharya:2025aa}.

A substantial part of this overhead can
be avoided by \emph{maximizing robustness} to all disturbances that
ultimately trigger error correction
\cite{Lidar-Brun:book}.  As Feynman presciently warned,
uncontrolled interactions ``may produce considerable havoc''
in a quantum computer~\cite{Feynman:85}.  If those interactions are
suppressed \textit{before} allocating error-correction resources, the
number of ancilla qubits, circuit depth, and other costs can be significantly
reduced.

There are generally two paths to potentially achieve small
  infidelities in the laboratory setting with qubits: (1) Starting with a
  model of the system and environment, achieve a control design that is robust to
  simulated conceivable uncertainties for transfer to the laboratory for
  performance evaluation. (2) Start directly in the laboratory,
  likely guided by (1), physical motivation, and insights. 
  Due to a host of uncertainties being present, the collective literature shows
  that neither of these approaches have proved to be fully
  satisfactory, especially for two qubit gates. 
  This paper takes neither of these
  approaches, but rather introduces a new theoretical framework and
  associated mathematical analysis.
Our method builds on a cornerstone of classical robust
control---\textit{uncertainty modeling}---in which disturbances are
treated as ``unknown but bounded'' elements of a well-defined set
\cite{Zames:1966,DesVid:1975,LMI94,ZhouDG:96}.  This naturally raises a
fundamental question: \textit{given} such a model, \textit{what is the
ultimate performance limit} of any control strategy?  
Here, we lay the analytical groundwork for answering that question.

\begin{figure}[t]
  \centering
  \includegraphics[width=0.5\textwidth]{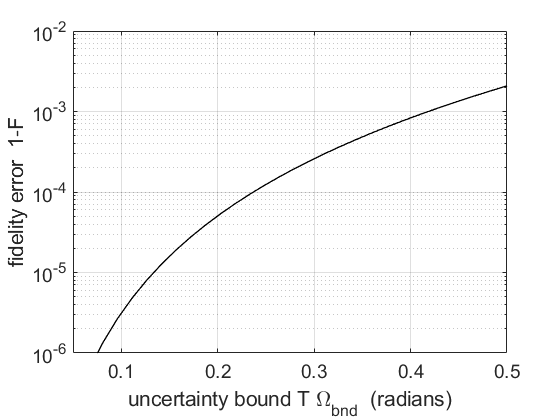}
  \caption{Plot of infidelity upper bound $1-\Flb$ from \cref{eq:Fbnd} in Theorem~\ref{avgThm}. The bound is shown in a log scale 
  vs. uncertainty $\tf\Omeff$ in radians.} 
  \label{lim}
\end{figure}

Our main theoretical result establishes an explicit upper bound on worst-case
infidelity as a function of a single, dimensionless
time-bandwidth uncertainty quantity $\tf\Omeff$ (see Theorem~\ref{avgThm} in
\cref{sec:main}).  \Cref{lim} plots this bound. Here $\tf$ is the gate time and
$\Omeff$ is an aggregate frequency that upper-bounds all relevant
terms in the system and system-bath Hamiltonians
[\cref{eq:Omunc def}]. 

Theorem~\ref{avgThm} follows from the
classical Method of Averaging~\cite{Hale:80} and a specialized
Bellman-Gronwall inequality~\cite{CoddLev:55}.  Although the bound is
not guaranteed to be tight, it delivers a quantitative measure for
both analysis and synthesis: any device that can implement the target
gate perfectly in the uncertainty-free model must, in the presence of
bounded uncertainty, achieve an actual fidelity no worse than
$\Flb(\tf\Omeff)$.  Conversely, control and design choices that lower
$\tf\Omeff$ automatically tighten the bound.

The bound provides a metric to compare with the
  results of either method (1) and/or (2). If the observed
  infidelity is too big, that outcome carries the message
  that additional relevant details need to be included in the model
  for case (1) and/or for case (2) improvements need to be made in the
  platform design and possibly its operational features. In contrast,
  if the observed infidelity is well below the bound, the situation is of
  course attractive; this circumstance could serve as a basis to
  stimulate an interchange between the laboratory and modeling
  efforts to improve the bound, possibly providing for a new and even
  better infidelity to be reached in the future.

Two features distinguish our approach from traditional fidelity
estimates.  First, the bound is derived for \textit{arbitrary} initial
states, including maximally entangled system-bath inputs, and therefore
does not rely on the completely-positive (CP) map framework that
follows from a product state
assumption~\cite{NielsenC:00}. This is essential for realistic
circuit execution, where the system becomes entangled with its
environment between error-correction
cycles~\cite{Schindler27052011}, thus precluding a CP map treatment \cite{Jordan:2004aa,Carteret:05,Rodriguez:08,Buscemi:2013,Dominy:14,Dominy:2016xy}.
Second, $\tf\Omeff$ captures
\textit{all} error sources: the ``known unknowns'' included in the
model and the ``unknown unknowns'' that inevitably remain.  A design
based on the former might, for example, predict $\tf\Omeff\le 0.1$
rad, certifying an infidelity below $10^{-5}$ in
\Cref{lim}.  Even if hidden errors double the bound to $0.2$
rad, the guaranteed error still stays beneath $10^{-4}$.  Numerical
evidence in \cref{sec:numex} shows that the Hamiltonian control acts in real time to steer the system while additionally providing robustness akin to the effect of feedback,
in this case, without measurement.

However, there is a limit to the types of unknown errors that can be included in the time-bandwidth uncertainty $\tf\Omeff$. For example, catastrophic errors such as qubit loss
are quite common in atomic systems:
an atom can fly out of the optical lattice; more common are erasure errors where the state of a qubit is completely reset to the ground state or the maximally mixed state. 
Such catastrophic errors can only be dealt with by error correction~\cite{Wu:2022aa}, not by robust control.

In summary, to apply the theory to a specific device in order to determine the performance limit as \cref{lim} indicates,
it is first necessary to determine the level of uncertainty $\tf\Omeff$.
As physics knowledge alone may not be sufficient,
there is a need to develop a complementary
data-driven uncertainty estimation procedure similar to those that have been developed for classical robust control \cite{id4c:92,KosutLB:92,modval:94}.  
Experiments such as those in \cite{tripathi2024-DB} for parameter estimation would need to be modified for uncertainty estimation, assisting and/or bypassing the
need for a detailed microscopic model, thereby providing the
information required by a compatible robust design framework.

The remainder of the paper formalizes the above ideas and demonstrates
their practical relevance through a detailed numerical example. Uncertainty modeling and fidelity measures are defined
in \cref{sec:model}-\refsec{fid}. The main theoretical framework
establishing a limit of robust performance is presented in
\cref{sec:main}, interpretations are given in \cref{sec:interp},
forms of robust optimization are discussed in \cref{sec:rbst opt}, 
an illustrative numerical example is in \cref{sec:numex},
and \cref{sec:conclude} has concluding remarks. Proofs
are deferred to the Appendices, along with a
sketch of various extensions of the framework.

Before describing our theoretical framework, we note that robust quantum control has a rich literature, including dynamical
decoupling
\cite{Viola:02,Santos:2006:150501,Quiroz:2013fv,Kabytayev:2014aa,Genov:2017aa},
geometric  
methods
\cite{spacecurvePRA:2023,spacecurve:2025,automatedgeometricspacecurve:2025},
and many pulse-shaping and optimization strategies tailored to
specific uncertainty classes
\cite{GreenETAL:2013,KosutGB:13,Soare2014,Kaby:2014,Paz-Silva2014,JonckSL:17,Ball2021,HaasHamEngr:2019,Chalermpusitarak2021,Cerfontaine2021,KosutBR:2022,BerberichKosutSchulte:2024,WeidnerETAL:2025,ChenCory:2025}.
Earlier fidelity bounds of a similar character were obtained in
\cite{PhysRevA.78.012308}, but those results neither incorporate the
set-membership uncertainty model nor attain the tightness achieved
here.

\section{Uncertainty Modeling}
\label{sec:model}

\subsection{Errors}

Errors affecting performance can
occur during state preparation, state evolution, and measurement.
Errors in state preparation and measurement (referred to as SPAM) will
certainly corrupt any evaluation of the state evolution even if the
latter is ideal.  These three operations all require control with
differing goals. \emph{We take the view that these are distinct design
  problems.}  As a result we focus on making the state evolution at
the final time as close as possible, despite uncertainties, to a
desired unitary. 
One consequence of this view is a fidelity measure that is strictly a
function of the state evolution over the gate time, and thus is
separated from any issues involved in state preparation or
measurement.

\subsection{Open bipartite system}

To illustrate the main ideas, we focus on a quantum gate
represented by an open bipartite system evolving over
the finite time interval $t\in[0,\tf]$.
The block diagram below shows an input/output representation of a
system $S$ coupled to a bath $B$.
\beq[eq:SBsys]
\bea{ll}  
\bea{rcl}
\bea{cc}
S & \vlongrightarrow{0.5in}
\\
& \ket{\psi(0)}
\\
B & \vlongrightarrow{0.5in}
\eea
&
\hspace{-1.6ex}
\!\!\mathbox{
  \bea{ccc}
  &&\\&U(t)&\\&&
  \eea
}
&
\hspace{-1.4ex}
\bea{cc}
\!\!\!\vlongrightarrow{0.5in} & {S}
\\
\ket{\psi(t)} & 
\\
\!\!\!\vlongrightarrow{0.5in} & {B}
\eea
\eea
\eea
  \eeq
The system and bath Hilbert space dimensions are $\ns$ and $\nb$, respectively,
with finite bath dimension $\nb$, though possibly large. This gives a finite total dimension
of $\n=\ns\nb$. The corresponding $\n$-dimensional unitary evolution $U(t)$ and state 
$\ket{\psi(t)}$ are given by
\beq[eq:Uevo]
\beasep{1.25}{lll}
  \dot U(t) &=& -i H(t)U(t),\quad U(0)=I
  \\
  \ket{\psi(t)} &=& U(t)\ket{\psi(0)},
  \quad
  \ket{\psi(0)}=\psiin
  \eea
\eeq
Here $\hbar=1$, hence the total system-bath Hamiltonian $H(t)$ is in units of
radians/sec or $H(t)/2\pi$ in Hz.


\subsection{Modeling assumptions}
As indicated in \cref{eq:Uevo}, we assume that the initial system-bath state is a pure 
$\n$-dimension state $\psiin \equiv \ket{\psi(0)}$, but not necessarily a product state. This will allow us to account for system-bath entanglement due to a prior gate operation. 
The corresponding $\n$-dimensional bipartite system Hamiltonian $H(t)$ is given by, 
\beq[eq:Ham]
H(t) = \left(\Hs(t)+\Hscoh(t)\right)\otimes I_B 
+ I_S\otimes H_B + \Hsb
\eeq
where $\Hs(t)$ is an assumed model of the uncertainty-free system
with the uncertainty-free unitary $\Us(t)$ obtained from,
\beq[eq:Usnom]
\dot{U}_S(t) = -i\Hs(t)\Us(t),\ \Us(0)=I_S
\eeq
The uncertainty-free system Hamiltonian can often be arranged to be of the form, 
\beq[eq:Hsnom]
\Hs(t) = {\Hs}_0+
\sum_j v_j(t){\Hs}_j
\eeq
with control variables 
$v_j(t)\in\Rbf,t\in[0,\tf]$.

The uncertain parts of the Hamiltonian \cref{eq:Ham} are the \emph{coherent error} $\Hscoh(t)$,  
the \emph{bath self-dynamics} $\Hb$,  and the \emph{system-bath coupling} $\Hsb$. 
The bath Hamiltonians $\Hb,\Hsb$ are  assumed constant but uncertain 
during any gate time operation $t\in[0,\tf]$. 
Thus, we define
\beq[eq:Hunc]
\Hunc(t) = \Hscoh(t)\otimes\Ib + \Is\otimes\Hb + \Hsb
\eeq
as the component of the total Hamiltonian that captures all the uncertainty.

Coherent errors in the system may contain biases and scale factors, some arising from
the signal generator and connectors to the quantum device; thus $\Hscoh(t)$
may depend on the controls. 
The uncertain bath self-dynamics is independent of the uncertainty-free system evolution and obeys
\beq[eq:Ub]
\dot{U}_B(t) = -i\Hb\Ub(t),\ \Ub(0)=\Ib
\eeq

\emph{Decoherence} is due entirely due to the presence of the system-bath 
coupling $\Hsb$, which has the general form,
\beq[eq:Hsbalf]
\Hsb = \sumalf  \Salf\otimes \Balf
\eeq
where $\alf$ denotes the specific coupling mechanism,
\eg, usually $\Salf$ consists of combinations of the Pauli operators $\sig_x,\sig_y,\sig_z$ acting on different qubits.
Obviously, if $\Hsb=0$ then the system and the bath each evolve independently; this is merely sufficient, and in general symmetries in $\Hsb$ give rise to noiseless subsystems wherein the system dynamics are purely unitary~\cite{Zanardi:97c,Knill:2000dq,ShabaniLidar:05a}.


\section{Fidelity}
\label{sec:fid}

\subsection{Uhlmann fidelity} 

The Uhlmann fidelity between two states $\rho$ and $\sigma$ is \cite{Uhlmann},
\beq[eq:Uhl fid]
\Fcal(\rho,\sigma)= \trace\sqrt{\sqrt{\rho}~\sigma\sqrt{\rho}}
\eeq 
When $\sigma$ is a pure state 
$\ketb{\psi}{\psi}$ this reduces to $\Fcal(\rho,\psi)=\sqrt{\bra{\psi}\rho\ket{\psi}}$, and when also $\rho$ is a pure state $\ketb{\phi}{\phi}$, we have 
$\Fcal(\phi,\psi)=|\braket{\psi}{\phi}|$.\footnote{Fidelity is sometimes defined as the square of \cref{eq:Uhl fid}, \eg, \cite{NielsenC:00} vs. \cite{Jozsa:94,Gilchrist:2005}.}
For the bipartite system \cref{eq:SBsys}, assuming a decoupled initial state 
$\psiin = \ket{\psi_S}\otimes\ket{\psi_B}$, 
the map from the $S$-channel input density matrix $\rho_{\rm in}=\ket{\psi_S}\bra{\psi_S}$ 
to the $S$-channel output density matrix $\rho_S=\Tr_B[U(T)\rho_{\rm in}U^\dagger(T)]$ 
is \emph{completely positive and trace preserving} (CPTP). 
However, as already noted, since consecutive inputs to gates are unlikely to be decoupled from the bath,
a CPTP map is not an accurate model for our purposes \cite{Jordan:2004aa,Rodriguez:08,Dominy:2016xy}.
Moreover, the bath coupling errors may accrue over many repetitions, rendering
$\Fcal(\rho_S,\rho_{\rm in})$ as an ineffective measure to evaluate robustness.
Instead, we consider an arbitrary pure system-bath state as the input to any gate operation. Rather than tracing out the bath and computing the fidelity between the desired and actual \emph{reduced} system states, we do so with the complete system-bath state.

\subsection{Design goal}

Referring to \cref{eq:Uevo}, for any pure input system-bath state $\psiin$, 
the final-time output state is,
\beq[eq:psiout]
\ket{\psi(\tf)}=U(\tf)\psiin
\eeq
The ideal design goal is that the final-time unitary $U(\tf)$ factors into a tensor product over $S$ and $B$. Thus
the ideal desired output state at the final-time is,
\beq[eq:psides ideal]
\ket{\psi_{\rm des}} = (\Ws \otimes W_B)\psiin
\eeq
where $\Ws$ is the $\ns\times\ns$ target unitary for the system
channel and where $W_B$ is any $\nb\times\nb$ bath unitary at the final time.
In Appendix \refsec{Fnuc}, following \cite{KosutGBR:06,GDKBH:10}, we outline how $W_B$ can be used as a free design variable to improve performance.
For the present analysis, it suffices 
to select $W_B=\Ub(\tf)$, the specific final-time bath unitary 
evolving from \cref{eq:Ub}. 
Thus the desired output state at the final-time is,
\beq[eq:psides]
\ket{\psi_{\rm des}} = (\Ws \otimes \Ub(\tf))\psiin
\eeq

\subsection{Fidelities}
\paragraph*{State fidelity} The fidelity between the final-time output state \cref{eq:psiout} and the desired state \cref{eq:psides} is
\beq[eq:Fstate]
\beasep{1.25}{rcl}
F(\psin) &\equiv& F(\psi_{\rm des},\psi(\tf)) = |\braket{\psi_{\rm des}}{\psi(\tf)}|
\\
&=&
|\psiinc\big(\Ws\otimes\Ub(\tf)\big)^\dag U(\tf)\psiin|
\eea
\eeq

\paragraph*{Worst-case fidelity} Defined over all pure input states by,
\beq[eq:Fwc]
\Fwc \equiv \min_{\psin}F(\psin)
\eeq

\paragraph*{Average fidelity}
Defined over the Haar measure on pure input states,
\beq[eq:F-ave]
\Fave \equiv \int F(\psin) d\psin
\eeq

\paragraph*{Nominal fidelity} Defined as the standard \emph{overlap
  fidelity} \cite{Gilchrist:2005,GDKBH:10} between the nominal
(uncertainty-free) unitary $\Us(t)$ at the final time and the target
unitary:
\beq[eq:Fuj nom]
\Fnom \equiv |\trace(\Ws^\dag \Us(\tf)/\ns)|
\eeq
Note that $\Fnom=1$ iff $\Us(\tf)=\phi\Ws$ with global phase
$|\phi|=1$. The worst-case, average,  
and nominal fidelity do not depend on the input state. All these fidleities evaluate
\emph{only} their respective performance to realize a unitary target. 

\subsection{Interaction picture}

To reveal robust performance properties, the system dynamics and
corresponding fidelity measures are better expressed in terms of the
\emph{interaction-picture unitary},\footnote{All interaction-picture operators are denoted by a tilde, e.g., $\Ui,\Hi$.}
\beq[eq:Rint]
\beasep{1.25}{rcl}
\Ui(t) &=& \Big( \Us(t)\otimes \Ub(t) \Big)^\dag U(t)
\eea
\eeq
which evolves as, 
\beq[eq:Rint evo]
\beasep{1.25}{rcl}
\dotUi(t) &=& -i\Hiunc(t)\Ui(t),
\quad
\Ui(0)=I
\eea
\eeq
Using the modeling assumptions from \cref{eq:Ham}-\cref{eq:Hsbalf}, results in
the \emph{interaction-picture uncertainty Hamiltonian} $\Hiunc(t)$ given explicitly by,
\beq[eq:Gint]
\beasep{1.25}{rcl}
\Hiunc(t) &=& \Gsunc(t)\otimes\Ib + \Gsb(t)
\eea
\eeq
with the indicated interaction-picture Hamiltonians,
\beq[eq:Gint1]
\beasep{1.25}{rcl}
\Gsunc(t) &=& \Us(t)^\dag\Hscoh(t)\Us(t)
\\
\Gsb(t) &=& \sumalf \Gsalf(t)\otimes \Gbalf(t)
\\
\Gsalf(t) &=& \Us(t)^\dag \Salf\Us(t)
\\
\Gbalf(t) &=& \Ub(t)^\dag \Balf \Ub(t)
\eea
\eeq

\subsection{Fidelity via interaction-picture unitary}
In terms of the final-time interaction-picture unitary $\Ui(\tf)$ defined in
\cref{eq:Rint}, the input-state dependent fidelity \cref{eq:Fstate}, 
now becomes,
\beq[eq:FstateR]
F(\psin) = |\psiinc(\Ws^\dag \Us(\tf)\otimes I_B)\Ui(\tf)\psiin|
\eeq
while the corresponding worst-case fidelity and average fidelity are still given by \cref{eq:Fwc} and \cref{eq:F-ave}, respectively, with $F(\psin)$ as in \cref{eq:FstateR}.
If the target unitary
$\Ws$ is in the reachable set of the uncertainty-free system, then for some $\Hs(t)$ 
the nominal fidelity $\Fnom=1$ in \cref{eq:Fuj nom} and,
\beq[eq:Fstate Ws]
F(\psin) = |\psiinc\Ui(\tf)\psiin|
\eeq
As shown in \cref{sec:Fpsimin},
the following is a prerequisite for the main result.
\beq[eq:Fnom1]
\beasep{1}{c}
\hline
\mbox{\bf Fidelity Lower Bounds}
\\
\hline\\
  \beasep{2}{ll}
  \mbox{\em If}
  &
  \Fnom = 1 \quad (\mbox{iff $\Us(\tf)=\phi\Ws,|\phi|=1$})
  \\
  \mbox{\em Then}
  &
  \left\{
  \beasep{1.25}{rcl}
  \ds \Fwc &=& \min_{\psin} |\psiinc \Ui(\tf)\psiin| \geq \Fwclow
  \\
  \displaystyle \Fave 
  &=& \int |\psiinc \Ui(\tf)\psiin| d\psin \\
&\ge& \abs{\trace~{\Ui(\tf)/d}} \geq \Favelow
 \eea
 \right.
 \\
  \eea
  \eea
   \eeq

\quad\emph{with he fidelity lower bounds,}
\beq[eq:low]
\beasep{1.5}{l}
\Fwclow \equiv  \max\bigl(1-\frac{1}{2}\|\Ui(\tf)-I\|^2,0\bigr) 
\\
\Favelow \equiv  \max\bigl(1-\frac{1}{2{\n}}\|\Ui(\tf)-I\|_{\mathrm{F}}^2,0\bigr)
\eea
\eeq
Here and henceforth, $\norm{\cdot}$ is the induced $2$-norm (the largest singular value) \cite{DesVid:1975} 
and $\norm{\cdot}_{\mathrm{F}}$ is the Frobenius norm (the square-root of the sum-square of singular values).\footnote{$\norm{\cdot}$ is also commonly known as the operator-norm~\cite{Bhatia:book}:
for any matrix $A$, $\norm{A}$
is the maximum singular value, and if $A$ is Hermitian, then $\norm{A}$ equals the maximum absolute value of the
eigenvalues. The Frobenius norm is the square root of the sum of the squares of the singular values: $\norm{A}_{\mathrm{F}} \equiv \sqrt{\Tr A^\dagger A}$, not to be confused with the trace norm or nuclear norm  (the sum of the singular values).}
A standard norm inequality between the Frobenius and induced $2$-norm is
$\|A\|_{\mathrm{F}}\leq\sqrt{\n}\|A\|$ for any operator $A$. As a result,
\beq[eq:ineq2]
\Favelow \geq \Fwclow
\eeq
as expected.
Also shown in \cref{sec:Fpsimin}, if given the final-time
unitaries $\Us(\tf)$ and $\Ui(\tf)$, then $\Fwc$ can be computed to within
any desired precision via an equivalent convex optimization.

\section{Robust Performance Limit}
\label{sec:main}

As \cref{eq:Fnom1} shows, if $\Fnom=1$ and the final-time interaction-picture unitary 
$\Ui(\tf)\approx I$ then both $\Fwc,\Fave\approx 1$.   Our aim is to find a
limit on how closely this goal can be achieved.  
A direct approach to maximize $\Fwc$ for any input state is to maximize the lower bound 
$\Fwclow$ in \cref{eq:low}.
Equivalently posed as an optimization problem, 
\beq[eq:wcopt]
\beasep{1.5}{ll}
\mbox{minimize}
&
{\ds\max_{\Hcalunc}} \|\Ui(\tf)-I\|
\\
\mbox{subject to}
&
\Hiunc(t)\in\Hcalunc,
\quad 
\vec v(t)=\{v_j(t)\} \in \Vcal = \Rbf^{N_c}
\eea
\eeq
with $\Ui(t)$ and $\Hiunc(t)$ from \cref{eq:Rint evo}-\cref{eq:Gint1}
and where $\Hcalunc$ is a set which characterizes the interaction Hamiltonian uncertainty, see, \eg. \cref{eq:Omunc def}.
The $N_c$ optimization variables are the controls $\vec v(t)$ in $\Hs(t)$ from \cref{eq:Hsnom},
with typical constraints in $\Vcal$ on magnitude, bandwidth, \etc\
 While this problem formulation is direct,  the main issue 
is the potentially prohibitive computational cost for a system with a large bath dimension or with connections to other states in the device, \eg, additional system levels and crosstalk. 
An approach to robust design is described next which 
deals with the computational issues and leads naturally to the main result
as depicted in \cref{lim}.

\subsection{Uncertainty characterization}

We address all these issues 
by first directly bounding infidelity as a function
of specific bounds on components of the uncertain interaction-picture Hamiltonian $\Hiunc(t)$ and its
time-average. 
For the Hamiltonians in \cref{eq:Gint1}, 
and with the time-average for any matrix $A$ defined by,
\beq[eq:avg mag]
\avg{A}=(1/\tf)\int_0^\tf A(t)dt
\eeq
define the following uncertainty bounds:
\beq[eq:Omunc def]
\beasep{1.5}{lcl}
\Ommax &\geq& {\ds\max_t}
\norm{\Hscoh(t)}
+\sumalf\norm{\Salf}~\norm{\Balf}
\\&\geq& {\ds\max_t}\normsm{\Hunc(t)}
\\
\Omavg &\geq&
\normsm{\avg{\Gsunc}}
+\sumalf\normsm{\avg{\Gsalf\otimes\Gbalf}}
\\&\geq& \|\avg{\Hi}\|
\\
\dOmavg &\geq& {\ds\max_t}\normsm{\Hiunc(t)-\avg{\Hiunc}}
\eea
\eeq

Given our earlier choice of setting $\hbar=1$, 
all these measures in \cref{eq:Omunc def} are in
units of frequency, specifically radians/sec, or in Hz when divided by $2\pi$.

The frequency $\Ommax$ reflects mostly intrinsic system errors, whereas
$\Omavg$ and $\dOmavg$ are composed of errors that can be affected by the control
dependent uncertainty-free unitary evolution $\Us(t)$, \ie,
$\Gsunc(t)$ and $\Gsb(t)$ as defined in \cref{eq:Gint1}.
Bounds similar to those in \cref{eq:Omunc def} are common to control
protocols based on dynamical decoupling~\cite{Viola:98,ViolaKL:99,Uhrig:07,KhodjastehLidar:07,DDPRA:2011,XiaGotzLidar:2011}.

\emph{It is important to note that in certain important cases of interest, 
such as bosonic baths, for some Hamiltonian terms the norms in \cref{eq:Omunc def} diverge.} This necessitates replacing the aforementioned norm with a different measure of uncertainty, \eg, one that is input-state dependent, 
such as the correlation functions in \cite{DDPRA:2011}. We defer a treatment along those lines to a future publication, but note that correlation functions are already subsumed in a Lindblad master equation as briefly described in \cref{sec:extend}. However, the convergence of the time-dependent perturbation theory underlying quantum master equations is likewise predicated upon finite operator norms~\cite{Mozgunov:2019aa}.  

The robust performance limit bound displayed in \cref{lim} and discussed in the Introduction is based on the
following theorem. 

\begin{thm}{Robust Performance Limit Fidelity Lower Bound}\label{avgThm}

\vspace{0.10in}
  Given the Hamiltonian uncertainty bounds \cref{eq:Omunc def}, define the dimensionless, effective
  time-bandwidth uncertainty bound, or error bound for short,
  \beq[eq:TOmeff]
  \tf\Omeff \equiv \sqrt{(\tf\Ommax)(\tf\dOmavg) + 4\tf\Omavg}
  \eeq
  with associated fidelity lower bound,
  \beq[eq:Fbnd]
  \Flb =  \max\left(1-\frac{1}{2}\left(e^{(\tf\Omeff/2)^2}-1\right)^2,0\right)
   \eeq
   
   Assume that the nominal fidelity [\cref{eq:Fuj nom}] is maximized,
   that is, $\Fnom=1$, or equivalently, $\Us(\tf)=\phi\Ws$ with global
   phase $|\phi|=1$. 
   
   Then both the worst-case fidelity
   \cref{eq:FstateR} and the average-case fidelity [\cref{eq:FstateR}] are bounded below by $\Flb$, i.e,
   \beq[eq:Fwc Fbnd]
     \beasep{1}{rcl}
   \Fwc &=& {\ds\min_{\psin}} |\psiinc \Ui(\tf)\psiin| \geq \Flb\\
   \Fave &=& \int |\psiinc \Ui(\tf)\psiin| d\psin \geq \Flb
   \eea
   \eeq
\end{thm}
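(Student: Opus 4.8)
The plan is to reduce the two fidelity claims to a single operator-norm estimate and then obtain that estimate by an averaging identity closed off with a Gronwall argument. Since $\Fnom=1$ is assumed, the prerequisite bounds in \cref{eq:Fnom1}--\cref{eq:low} give $\Fwc\ge\Fwclow$ and, using the norm inequality $\norm{A}_{\mathrm F}\le\sqrt{\n}\,\norm{A}$ which yields $\Favelow\ge\Fwclow$ (\cref{eq:ineq2}), also $\Fave\ge\Favelow\ge\Fwclow$. Both target inequalities therefore follow from the single statement $\Fwclow\ge\Flb$, which (comparing \cref{eq:low} with \cref{eq:Fbnd}, with the outer $\max(\cdot,0)$ preserved by monotonicity) is equivalent to the operator-norm bound
\[
\norm{\Ui(\tf)-I}\le e^{(\tf\Omeff/2)^2}-1 .
\]
So the entire proof collapses to bounding how far the interaction-picture propagator $\Ui(\tf)$ from \cref{eq:Rint evo} can drift from the identity.

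For that estimate I would use the Method of Averaging anticipated after \cref{eq:Omunc def}. Split the generator as $\Hiunc(t)=\avg{\Hiunc}+\bigl(\Hiunc(t)-\avg{\Hiunc}\bigr)$ into its constant time-average and a zero-mean remainder, and introduce the antiderivative $K(t)=\int_0^t\bigl(\Hiunc(s)-\avg{\Hiunc}\bigr)\,ds$. The crucial structural fact is that $K$ vanishes at \emph{both} endpoints, $K(0)=K(\tf)=0$, precisely because the remainder has zero mean over $[0,\tf]$. Starting from the integral form of \cref{eq:Rint evo}, $\Ui(\tf)-I=-i\int_0^\tf\Hiunc(t)\Ui(t)\,dt$, and integrating the zero-mean part by parts, the boundary term drops out and one is left with
\[
\Ui(\tf)-I=-i\,\avg{\Hiunc}\!\int_0^\tf\!\Ui(t)\,dt+\int_0^\tf\! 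K(t)\,\Hiunc(t)\,\Ui(t)\,dt .
\]
The first term is first order and bounded by $\tf\,\norm{\avg{\Hiunc}}\le\tf\Omavg$ because $\Ui$ is unitary; the second is genuinely second order, since $\norm{K(t)}\le\min(t,\tf-t)\,\dOmavg$ — here I would invoke $K(\tf)=0$ a second time to obtain the two-sided estimate — and $\norm{\Hiunc(t)}\le\Ommax$, so that term is at most $\Ommax\dOmavg\int_0^\tf\min(t,\tf-t)\,dt=\tfrac{1}{4}\tf^2\Ommax\dOmavg$.

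Collecting the two pieces gives $\norm{\Ui(\tf)-I}\le\tf\Omavg+\tfrac{1}{4}\tf^2\Ommax\dOmavg=(\tf\Omeff/2)^2$ by the definition \cref{eq:TOmeff}. To land on the stated form I would then invoke the specialized Bellman--Gronwall inequality: writing the estimate as an integral inequality for $\norm{\Ui(t)-I}$ on $[0,t]$ and iterating the self-consistent average term converts the linear estimate into the exponential $e^{(\tf\Omeff/2)^2}-1$ (equivalently, one may simply relax with $x\le e^{x}-1$ for $x\ge0$). Substituting back through $\Fwclow\ge\Flb$ and the chain above then yields $\Fwc,\Fave\ge\Flb$.

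The main obstacle is bookkeeping the non-commutativity while keeping the constants sharp: $\avg{\Hiunc}$ and the zero-mean remainder do not commute, and the summands $\Gsalf\otimes\Gbalf$ in \cref{eq:Gint1} carry time-dependent conjugations, so the integration by parts and the $K$-estimate must be carried out at the operator level in the induced $2$-norm rather than componentwise. In particular the factor $\tfrac{1}{4}$ — as opposed to the $\tfrac{1}{2}$ produced by the one-sided bound $\norm{K(t)}\le t\,\dOmavg$ — hinges entirely on exploiting $K(\tf)=0$ to get the symmetric $\min(t,\tf-t)$ estimate, so reproducing the relative weighting of $4\tf\Omavg$ versus $(\tf\Ommax)(\tf\dOmavg)$ inside \cref{eq:TOmeff} is where the argument must be handled with care.
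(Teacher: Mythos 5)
Your proof is correct, but its core estimate follows a genuinely different route from the paper's. The paper uses the multiplicative averaging transformation $\Ui(t)=(I+K(t))V(t)$, derives an ODE for $V$ with generator $(I+K)^{-1}\bigl(\Hi K+\avg{\Hi}\bigr)$, proves a separate lemma that $\normsm{(I+K)^{-1}}\le 1$ for anti-Hermitian $K$, and then closes the resulting integral inequality with the Bellman--Gronwall lemma to get $\norm{\Ui(\tf)-I}\le e^{c(\tf)}-1$ with $c(\tf)=\tf\Omavg+(\tf\Ommax)(\tf\dOmavg)/4=(\tf\Omeff/2)^2$. You instead split the generator additively and integrate the zero-mean part by parts, exploiting $K(0)=K(\tf)=0$; because your resulting identity has the unitary $\Ui$ (norm one) inside both integrals rather than the error $\Ui-I$, there is no self-consistent term at all, and you obtain the \emph{linear} bound $\norm{\Ui(\tf)-I}\le\tf\Omavg+\tf^{2}\Ommax\dOmavg/4=(\tf\Omeff/2)^2$ outright. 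Your invocation of Bellman--Gronwall is therefore superfluous (and slightly misdescribed---there is nothing to iterate); the parenthetical relaxation $x\le e^{x}-1$ for $x\ge 0$ is all that is needed to reach the stated form of $\Flb$ in \cref{eq:Fbnd}, and it is valid. Both arguments share the essential ingredients: the average-plus-zero-mean decomposition, the antiderivative $K$ vanishing at both endpoints (giving the two-sided bound $\norm{K(t)}\le\dOmavg\min(t,\tf-t)$ and hence the factor $\tf^{2}/4$), and the reduction of both fidelity claims to one operator-norm estimate---your chain $\Fave\ge\Favelow\ge\Fwclow\ge\Flb$ via \cref{eq:ineq2} differs harmlessly from the paper's $\Fave\ge\Fwc\ge\Flb$. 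What your route buys: it is more elementary (no inverse-norm lemma, no Gronwall) and strictly tighter, showing that the exponential in \cref{eq:Fbnd} is an artifact of the Gronwall step rather than a feature of the dynamics. What the paper's route buys: the transformation $(I+K)V$ is the standard Method-of-Averaging machinery, which the appendix reuses for the Frobenius-norm/average-case analysis and which is the natural starting point for higher-order refinements.
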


Note that without the $\max$,
\beq[eq:TOm rad]
\beasep{1.25}{c}
0\leq F_{\rm bnd}\leq 1
\\
\text{iff}
\\
\tf\Omeff \leq 2\sqrt{\ln(1+\sqrt{2})}
= 1.8776 \mbox{ radians}
  \eea
\eeq 
which defines a physical range of error bound values for which $\Flb$ provides a non-trivial bound.

\subsection{Sketch of proof}
The full proof in \cref{sec:avgThm} is based on a modified version of the standard transformation of variables used in the classic \emph{Method of Averaging} \cite{Hale:80}. In this case, the variable to be transformed is the interaction-picture unitary. The resulting differential equation highlights the terms that involve time-averaging. When substituted into the norm of the transformed interaction-picture unitary error in \cref{eq:Fnom1}, a bound can be obtained using the terms in \cref{eq:Omunc def} by appealing to a particularly applicable version of the \emph{Bellman-Gronwall Lemma} \cite{CoddLev:55}.

%

\section{Interpretations}
\label{sec:interp}

As previously presented in the Introduction, \cref{lim} shows a 
plot on a logarithmic scale of
the infidelity upper bound $1-\Flb$ versus the effective time-bandwidth
uncertainty bound $\tf\Omeff$.
To utilize the bounding curve to predict expected performance, 
the range of the effective uncertainty level $\tf\Omeff$ needs to
be determined from the device. 
It is important to emphasize (again) that the effective time-bandwidth
uncertainty parameter $\tf\Omeff$  includes \emph{all}  Hamiltonian uncertainties, both those that have been called ``known unknowns''
as well as, by implication, ``unknown unknowns.''  
After incorporating an uncertainty model and robust design, to determine an actual 
bound on $\tf\Omeff$ will undoubtedly require data from experiments
in much the same way as
existing approaches to uncertainty estimation are determined for classical systems,
\eg, \cite{id4c:92,KosutLB:92,modval:94}.

\subsection{Ideal minimum uncertainty measure}

It is reasonable to assume that the system is sufficiently well designed so
that the uncertainty-free model system is completely controllable.  
Thus the fidelity of the uncertainty-free \emph{model} 
can achieve the limit of $\Fnom=1$. If, in addition,  all the time-averaged terms 
directly affected by control \cref{eq:Gint} could be annihilated, that is, 
coherent and system-bath coupling errors, equivalently
$\avg{\Hiunc}=0$, then the effective time-bandwidth 
uncertainty is the smallest possible, namely, $\tf\Omeff=\tf\Omunc$, the intrinsic uncertainty \cref{eq:Omunc def}. Any remaining errors can be further minimized by a combination of other design variables. Under these ideal conditions, stated as a corollary
to Theorem~\ref{avgThm}:
\beq[eq:avgThm TOmunc]
\beasep{1}{c}
\hline
\mbox{\bf Minimum Time-Bandwidth Uncertainty}
\\
\hline
\beasep{1.25}{ll}
\mbox{\em If}\!\!
&
\!\!\!\!\left\{
\beasep{1.25}{l}
\Fnom=1 \quad (\mbox{iff $\Us(\tf)=\phi\Ws,|\phi|=1$})
\\
\avg{\Gsunc}=0
\\
\avg{\Gsalf\otimes\Gbalf}=0,\forall\alf
\eea
\right.
\\
\mbox{\em Then}
& \tf\Omeff = 
\tf\Ommax
\\
&\!\!\!\!
= \sumalf\norm{\Salf}~\norm{\tf\Balf}
+
{\ds\max_t}\norm{\tf\Hscoh(t)}
\eea
\eea
\eeq
The idealized assumptions in \cref{eq:avgThm TOmunc} 
reduce the effective time-bandwidth uncertainty to
the minimum intrinsic value of $\tf\Ommax$ as shown above: the sum of the inherent strength of 
the sum of system-bath couplings and coherent errors.

\subsection{Selected gate times}

\cref{lim1} shows limit bounds versus the effective uncertainty
frequency $\Omeff/2\pi$ in MHz, each bound corresponding respectively
to the three selected gate times displayed that are 
typical of superconducting transmon qubits, \eg, \cite{tripathi2024-DB}.
\tabref{lim-tab} shows specific maximum uncertainty
frequencies in Hz ($\Omeff/2\pi$) to achieve infidelities bounded
by $10^{-4}$ and $10^{-5}$, respectively, for the three gate times.
Obviously the same infidelity bounds could be achieved with a
longer gate time and smaller uncertainty.

\begin{figure}[t]
  \centering
  \includegraphics[width=0.5\textwidth]{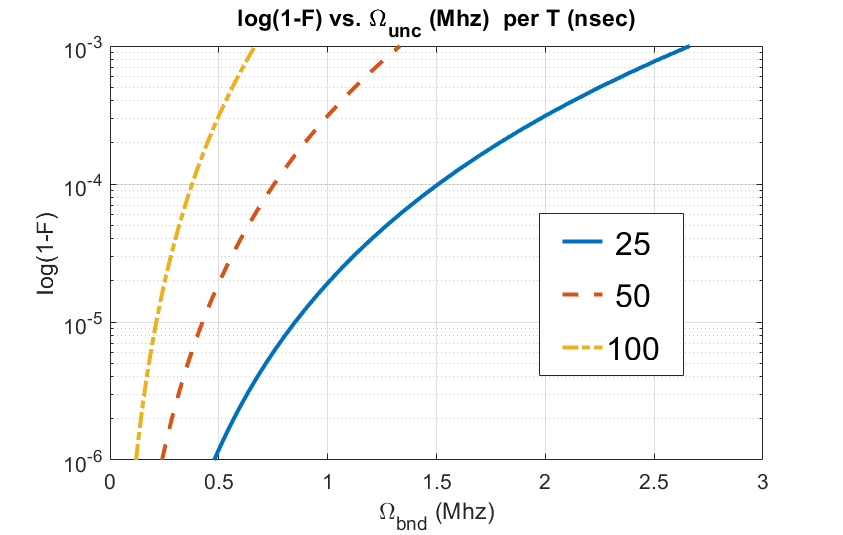}
  \caption{Plot of three performance limit bounds on log of fidelity
    error $1-\Flb$ versus the effective uncertainty $\Omeff$ in MHz for typical gate
    times $\tf\in\{25,50,100\}$ nsec.}
  \label{lim1}
\end{figure}
\begin{table}[t]
  \centering
  \begin{small}
  \btab{|c||c||c||c|}
  \hline
  $1-F\leq$ & $T=25$ ns & $T=50$ ns & $T=100$ ns
  \\
  \hline\hline
  $10^{-4}$ & $1.51$ MHz & $754$ KHz & $377$ KHz
  \\\hline
  $10^{-5}$ & $850$ KHz & $425$ KHz & $213$ KHz
  \\
  \hline
  \etab
  \end{small}
  \caption{Maximum uncertainty frequencies ($\Omeff/2\pi$ in MHz) from
    [Theorem~\ref{avgThm} in \cref{sec:main}] and \cref{lim1} to achieve
    the indicated infidelity bounds on $1-F$ for the three
    selected gate times $\tf$ in nanoseconds (ns).}
  \label{lim-tab}
\end{table}
%

\subsection{Bounding Bath Uncertainty}
\label{sec:bounds}

Maximizing the nominal fidelity while eliminating the time-averaged
coherent interaction term is easily handled by control.  Eliminating,
or greatly reducing, the time-average of the system-bath coupling
terms is more difficult, 
and requires some knowledge of the bath dynamics; with such knowledge, techniques such as dynamical decoupling and quantum error correction can be used toward this end \cite{Lidar-Brun:book}.
%
In addition, without assuming a detailed knowledge of bath dynamics, 
a variety of effective uncertainty bounds $\tf\Omeff$ can be formed 
dependent on assumptions about the bath.
For example, suppose the bath part of the system-bath coupling 
and the bath self-dynamics are both 
approximately known, \ie, 
$\norm{\Balf-\bar{B}_\alpha}\leq\del_B$ and 
$\norm{\Hb-\bar{H}_B}\leq\Del_B$. 
Knowledge of $\bar B,\bar U_B, \del_B, \Del_B$ 
is easily incorporated into the bounds \cref{eq:Omunc def}.
Whatever the assumptions, the resulting effective uncertainty measure 
$\tf\Omeff$ will provide an 
upper bound on predicted infidelity.

\subsection{Unknown unknowns}
Finding controls to ensure that the coherent interaction time-average 
$\normsm{\avg{\Gsunc}}\approx 0$ is very likely.
However, in the face of unknown uncertainties, 
it may not be possible to completely annihilate the time-averaged interaction-picture Hamiltonian
of the system-bath coupling term in \cref{eq:Gint}. When $\normsm{\avg{\Hi_{SB}}}>0$,
it follows that $\Omavg=\normsm{\avg{\Hiunc}} > 0$.
In this case 
the effective time-bandwidth uncertainty bound, $\tf\Omeff$, 
contains \emph{all} uncertainties, both those known and unknown. In the ideal case   
when $\Omavg=0$, an assumption in \cref{eq:avgThm TOmunc}, $\Omeff$ reduces to $\Ommax$. 
The robust performance bound \cref{lim} can be used to give an approximate accounting of the effect of the inevitable unknown uncertainties.

For example, if the designed model based on ``known unknowns'' 
yields $\tf\Omunc\leq 0.15$ radians, then the 
corresponding upper bound on infidelity is $1-\Flb = 1.59\times{10}^{-5}$.
A relative uncertainty increase of 100\% 
from unknown sources to $\tf\Omunc\leq 0.30$ radians yields  $1-\Flb = 2.59\times{10}^{-4}$,
more than a 16-fold increase in infidelity, but still below a $10^{-3}$ error.
As previously stated, 
even if the effective uncertainty increases substantially, that does not mean the infidelity will also.
The bounding curve \cref{lim} thus provides a reasonable assurance that no matter how the system is designed, even in the face of unknown uncertainties unaccounted for in the design model, a small infidelity could still accrue and all may be well. The numerical example in \cref{sec:numex} provides further assurance.

\section{Robust Optimization}
\label{sec:rbst opt}

The main result on the limit of robust performance, Theorem~\ref{avgThm}, provides 
a means, and criteria, for both \emph{analysis}  and
\emph{synthesis} of a robust design for a controlled quantum gate.  
Specifically, to make $\Fnom=1$
the final time nominal system unitary $\Us(\tf)$ should be very close
to the target $\Ws$, and simultaneously, the terms in the
time-bandwidth uncertainty $\tf\Omeff$ which are dependent on its
evolution over $t\in[0,T]$ should be as small as possible. This suggests that a 
robustness measure for optimization is the magnitude of all
time-averages of interaction Hamiltonians dependent on the control
variables that manipulate the evolution of the 
\emph{uncertainty-free system unitary}, $\Us(t),t\in[0,\tf]$.
Symbolically representing the controls by $v$, the optimization measures are,   
\beq[eq:Jrbst set]
\beasep{1.25}{l}
\Fnom(v) = |\trace(\Ws^\dag\Us(\tf)/\ns)|^2
\\
\Jrbst(v) =
\max\left\{
\beasep{1.5}{l}
\norm{\avg{\Us^\dag\Hscoh\Us}},
\\
\norm{
  \avg{
    \Us^\dag\Salf\Us\otimes
    (\Ub^\dag\Balf\Ub)
  }
  },\forall\alf
\eea
\right\}
\eea
\eeq
The $\alf$-dependent terms require a model of the bath. 
With no knowledge of the bath, 
the robutness measure reduces to,
\beq[eq:Jrbst set 1]
\Jrbst(v) =
\max\left\{
\beasep{1.25}{l}
\norm{\avg{\Us^\dag\Hscoh\Us}},
\\
\norm{
  \avg{
    \Us^\dag\Salf\Us  
    }
  },\forall\alf
\eea
\right\}
\eeq
As previously discussed in \cref{sec:interp} there are a variety of possibilities depending on approximate bath modeling assumptions.

There are also constraints on the control variable $v$ that are platform dependent. For example, $v$ may originate 
from a waveform generator that is driven by a command signal $\bar v$. The constraint 
$v\in\Vcal$ characterizes the relationship, \eg, $\Vcal$ delineates the constraints on magnitude, power, bandwidth, sampling rate, \etc\ Such constraints, if not taken into account, can have a significant affect on performance, \eg, \cite{AllenPRA:17}.

Regardless of the form of the robustness measure and control constraint set, simultaneous minimization of the nominal infidelity $1-\Fnom(v)$
and $\Jrbst(v)$ subject to $v\in\Vcal$ has been presented in various
ways in~\cite{Viola:98,ViolaKL:99,Uhrig:07,KhodjastehLidar:07,DDPRA:2011,XiaGotzLidar:2011,GreenETAL:2013,Soare2014,Kaby:2014,Paz-Silva2014,Ball2021,HaasHamEngr:2019,Chalermpusitarak2021,Cerfontaine2021}.

For example, consider a \emph{single-stage optimization},
\beq[eq:onestage]
\beasep{1.25}{ll}
\mbox{minimize}&
1-\Fnom(v) + \lam \Jrbst(v)
\\
\mbox{subject to}& v\in\Vcal
\eea
\eeq
where $\lam$ is a preselected
parameter that weighs the relative objectives.
Alternately, the \emph{two-stage optimization} described in \cite{KosutBR:2022},
first maximizes only the
nominal fidelity $\Fnom(v)$.  When this fidelity crosses a high
threshold, $f_0\approx 1$, the optimization switches to minimizing the robustness
measure $\Jrbst(v)$ while keeping $\Fnom(v)$ above $f_0$.  This
results in the following formulation,
\beq[eq:twostage]
\beasep{1.25}{l}
\mbox{\em Stage 1}\
\max \Fnom(v),\ v\in\Vcal
\\
\mbox{{\em Stage 2} when $\Fnom(v) \geq f_0$}
\\
\quad\quad\quad
\left\{
\beasep{1.25}{ll}
\mbox{minimize} & \Jrbst(v)
\\
\mbox{subject to}&
\Fnom(v) \geq f_0,\ v\in\Vcal
\eea
\right.
\eea
\eeq
No matter the formulation, 
\emph{the quantum control design problem is not a convex optimization.} 
It is a subset of the classical bilinear control problem where the control
multiplies the state. All optimization methods are iterative, 
and there is no one-shot solution except for
some exceptional cases, \eg, \cite{Krener:78}. 
However, the freedom to minimize both the infidelity $1-\Fnom(v)$ 
and robustness measure $\Jrbst(v)$ 
is known to arise from the
ability to roam over the null space at the top of the fidelity
landscape~\cite{RabitzHR:04,Ho.PRA.79.013422.2009,HsiehWRL:10,BeltraniETAL:2011,MooreR:2012,HockerETAL:2014,RussellRW:17,KosutArenzRabitz:2019}. 
The structure of the quantum control landscape, despite being ``bumpy'' with numerous saddles and seldom (topologically ``almost never'') contains local optima, generally leads to convergence. 

\begin{figure*}[t]
\centering
\btab{ccc}
\hspace{-0.2in}
\includegraphics[width=0.33\textwidth]{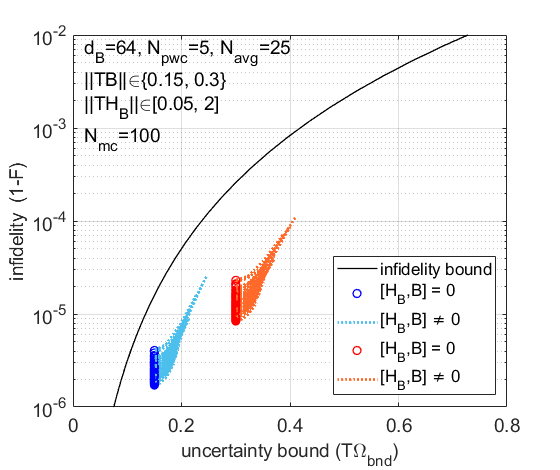}
&
\includegraphics[width=0.33\textwidth]{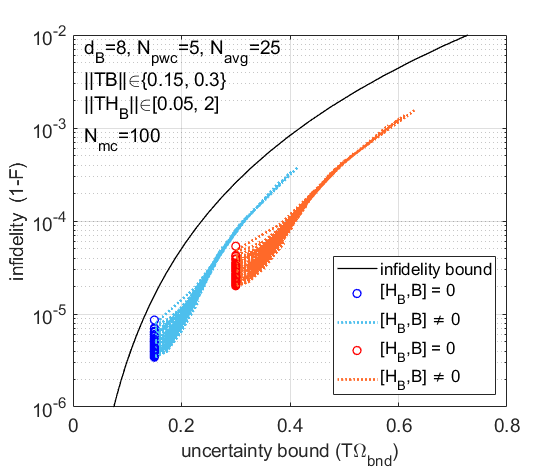}
&
\includegraphics[width=0.33\textwidth]{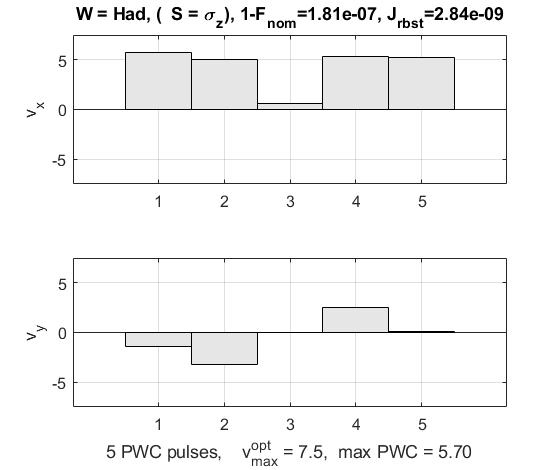}
\etab
\caption{
Results obtained from solving \cref{eq:opt1qu const0} with normalized gate time of $\tf=1$.
{\bf Left $(\nb=64)$ and Middle ($\nb=8$): Infidelity vs. uncertainty} Black line is the limit bound from Theorem~\ref{avgThm}. Blue and red circles are $1-\Fwclow$ from \cref{eq:Fnom1} for bath Hamiltonians $(\Hb,B)$ 
which commute in $\sig_x$ as in \cref{eq:HBB aligned} with the two values for $\|TB\|$ given in \cref{eq:eigs HBB}: red is 0.3, 
blue 0.15. 
The light blue and red points are plots of $1-\Fwclow$ showing 100 samples each of 8 uniformly spaced samples from  the range for 
$\norm{H_{SB}}$ from \cref{eq:eigs HBB} of the non-commuting coefficients $(h,g)$.
{\bf Right: Piece-wise-constant (PWC) pulses} over normalized gate time $t\in[0,1]$ with $\Npwc=5$
and control magnitude constraint set at $v_{\max}=7.5$ in 
\cref{eq:opt1qu const0}. The largest control magnitude achieved is 5.70 in the $\sigma_x$ control. 
}
\label{had}
\end{figure*}

\section{Numerical example}
\label{sec:numex}

\subsection{Single qubit system}
Consider a single qubit system with controls in $\sig_x$ and $\sig_y$,
no coherent errors, and known to be coupled via $\sig_z$ to an uncertain time-independent bath. The resulting model Hamiltonian is, 
\beq[eq:ex const]
\beasep{1.1}{rcl}
H(t) &=& H_S(t)\otimes I_B + \Ib\otimes \Hb+ \Hsb
\\
H_S(t) &=& v_x(t)\sig_x+v_y(t)\sig_y
\\
\Hsb &=& \sig_z\otimes B
\eea
\eeq
The bath Hamiltonian $\Hb$ and the bath operator $B$ are constant but uncertain over any gate time $\tf$. 
From the definitions in \cref{eq:Gint} the corresponding 
interaction uncertainty Hamiltonian is,
\beq[eq:Hisb example]
\Gsb(t)=\widetilde S(t) \otimes \widetilde B(t)
\eeq
with interaction terms,
\beq[eq:Bi Si]
\widetilde{S}(t)=\Us(t)^\dag\sig_z\Us(t),\
\widetilde{B}(t)=\Ub(t)^\dag B\Ub(t)
\eeq
where $\Us(t)$ is the solution of \cref{eq:Usnom} and $\Ub(t)$ of \cref{eq:Ub}.

\subsection{Uncertainty bounds}
Assuming no knowledge of $\Hb$ and $B$, 
and only knowing from \cref{eq:ex const} that the bath couples to the system via $\sig_z$,
an obvious choice with this limited knowledge is to set the robustness measure 
for optimization to be as defined in \cref{eq:Jrbst set 1}, 
\beq[eq:Jrbst numex]
\Jrbst=\normsm{\avg{\Us^\dag\sig_z\Us}}
\eeq
Suppose it is possible to simultaneously make $\Jrbst=0$ and $\Fnom=1$. Then from Theorem~\ref{avgThm}
the bounding terms that make up $\tf\Omeff$ 
as defined in \cref{eq:Omunc def} become, 
\beq[eq:bnds Jzero]
\beasep{1.25}{rl}
\Omunc &= \norm{B}
\\
\Omavg &= \normsm{\avg{\widetilde S\otimes(\widetilde B-B)}}
\\
\dOmavg &= {\ds\max_t}\normsm{\widetilde S(t)\otimes\widetilde B(t)-\avg{\widetilde S\otimes(\widetilde B-B)}}
\eea
\eeq
If the bounding values \cref{eq:bnds Jzero} are known or learned, then the effective time-bandwidth product $\tf\Omeff$ 
and corresponding infidelity bound $1-\Flb$ can be calculated from Theorem~\ref{avgThm}. 

It is also worth noting that 
when $\Jrbst = 0$ (equivalently, $\avg{\Us^\dag\sig_z\Us} =0$),
it follows from Roth's lemma~\cite{Roth:1934aa},
$\overrightarrow{ A B C } = \left( C^{T} \otimes A \right) \vec{B}$,
that
the $4\times 1$ vector $\vec\sig_z$ must be in the nullspace of the
$4\times 4$ matrix,
\beq[eq:Acal]
\Acal = \frac{1}{\tf}\int_0^\tf \Us(t)^T \otimes \Us(t)^\dag dt
\eeq
This effect is verified in the numerical example to follow.

\subsection{Robust control optimization}
The control design goal is
to make the Hadamard gate: $\Ws=(\sig_x+\sig_z)/\sqrt{2}$.
With a magnitude constraint of $v_{\max}$  placed on the controls, a 
robust control candidate that makes both $1-\Fnom$ and $\Jrbst$ be $\approx 0$ 
is found by solving a single-stage optimization \cref{eq:onestage}
for controls $\{v_x(t),v_y(t),t\in[0,\tf]\}$ from,
\beq[eq:opt1qu const0]
\beasep{1.25}{ll}
\mbox{minimize}&
1-\Fnom + \lam \Jrbst
\\
\mbox{subject to}&
\Fnom =
|\trace(\Ws^\dag U_S(\tf))/2|^2
\\&
\Jrbst = \norm{\avg{\Us^\dag\sig_z\Us}}
\\&
|v_{x,y}(t)| \leq v_{\max}
\eea
\eeq
An interesting aspect of the optimization form is that except for the assumption that the bath is coupled via $\sig_z$,
\emph{no specific bath knowledge is required.} In addition, annihilating $\Jrbst$ would also reduce the impact of any constant coherent errors dependent on $\sig_z$.

The optimization is performed with the final time normalized to $\tf=1$, 
$N_{\rm pwc}=5$ piecewise constant (PWC) control pulses,
$v_{\max}=7.5$, and
$\lam=0.1$. The time-average of the interaction
Hamiltonian $\widetilde S(t)$ is approximated in discrete time by,
\beq[eq:avg numex]
\avg{\Us^\dag\sig_z\Us} 
\approx 
\frac{1}{\Navg}\sum_{k=1}^\Navg U_S(t_k)^\dag\sig_z U_S(t_k)
\eeq
Setting $\Navg=25$ results in $\Navg/\Npwc=5$ samples per pulse. This yields the $v_x$ and $v_y$ that define the robust control solution as two sequences of $5$ pulses. Theorem~\ref{avgThm} guarantees that the resulting infidelity will lie below the bound. 
The control pulses shown in \cref{had} 
achieve a nominal infidelity of $1-\Fnom = 1.81\times 10^{-7}$  
and a robustness measure $\Jrbst = 2.84\times 10^{-9}$. 
This low value of $\Jrbst$ indicates a very close proximity
to the nullspace defined by $\Acal$ from \cref{eq:Acal}.
The largest control magnitude is 5.70 in the $x$ channel, 
well within the constraint $v_{\max}=7.5$. For a gate time of $\tf=50$ nsec, 
the largest control magnitude would be 114 Mhz.

Although not shown, repeating the optimization from many random starts, all result in different pulse sequences with different performance. 
However, all return $\Fnom\approx 1$ and $\Jrbst\approx 0$. Additionally,
all provide similar performance in simulations when evaluated with the bath characteristics described below. However, we expect that incorporating additional information about the bath and allowing for different pulse shapes (e.g., Gaussian) has the potential to further reduce the infidelity and increase robustness.

\subsection{Performance evaluation}
To evaluate performance of a robust control from \cref{eq:opt1qu const0},
the worst-case lower bound $1-\Fwclow$ from \cref{eq:low} is computed with
the unknown bath uncertainties $(\Hb,B)$ modeled as combinations of qubits composed of Pauli matrices. 

Although many possible variations can be considered, \eg, bilinear coupling terms, spin baths as in quantum dots~\cite{sousa:115322}, \etc,
for illustrative purposes,
two instances are used to evaluate the robust control. The first is where $(\Hb,B)$ commute, and both 
are linear combinations of isolated $\sig_x$ terms. In the second $(\Hb,B)$ do not commute, with
$\Hb$ a linear combination of only $\sig_x$ terms and $B$ a linear combination of only $\sig_z$ terms.
For $\qb$ bath qubits (resulting in bath  
dimension $\nb=2^\qb$) the two cases are:
\beq[eq:HBB aligned]
\bea{ll}
\mbox{commuting}
&
\left\{
\beasep{1.25}{ccl}
H_{Bx} &=& \sum_{b=1}^\qb h_x^b \sig_x^b 
\\
B_x &=& \sum_{b=1}^\qb g_x^b \sig_x^b 
\eea
\right.
\eea
\eeq
and
\beq[eq:HBB unaligned]
\bea{ll}
\mbox{not commuting}
&
\left\{
\beasep{1.25}{ccl}
H_{Bx} &=& \sum_{b=1}^\qb
h_x^b \sig_x^b 
\\
B_z &=& \sum_{b=1}^\qb
g_z^b \sig_z^b 
\eea
\right.
\eea
\eeq
The $(h,g)$ coefficients are chosen 
randomly to restrict the range of $\norm{\tf B}$ and $\norm{\tf H_{B}}$
to the following sets of values in radians:
\beq[eq:eigs HBB]
\beasep{1.25}{ccc}
\norm{\tf B} &\in& \{0.15,0.3\}
\\
\norm{\tf H_{B}} &\in& [0.05,2]
\eea
\eeq
If the bath terms were actually commuting as indicated by \cref{eq:HBB aligned}, 
then $[H_B,B]=0$ so that the interaction-picture bath operator would be a constant, specifically,
\beq[eq:Bix align]
\tilde B_x(t) = U_{Bx}(t)^\dag B_x U_{Bx}(t)=B_x 
\eeq
This holds for isolated commuting terms in either $y$ or $z$ as well.
From the form of \cref{eq:Hisb example}, if the robust optimization results in
$\Fnom=1$ and $\avg{\Us^\dag\sig_z\Us}=0$, it then follows from \cref{eq:bnds Jzero} that 
the effective uncertainty is equal to the intrinsic uncertainty,
\beq[eq:intrinsic unc]
\tf\Omeff=\norm{\tf B_x}
\eeq
The blue and red plots show the worst-case lower bound $1-\Fwclow$ from \cref{eq:Fnom1}. 
The blue circles in \cref{had} (left and middle) correspond to when the bath Hamiltonians are commuting
as in \cref{eq:HBB aligned}, each being composed, respectively, of $\qb=2\ (\nb=4)$ 
and $\qb=6\ (\nb=64)$ uncertain linear combinations of $\sig_x$.
The resulting uncertainty error is at 0.15 and 0.3 radians reflecting exactly the two 
values in \cref{eq:eigs HBB}: 
$\norm{\tf\Omeff}=\norm{\tf B_x}\in\{0.15,0.3\}$. 
When commuting, the infidelity is unaffected by 
the range of $\norm{\tf\Hb}$ \cref{eq:eigs HBB}.  

When $(\Hb,B)$ do not commute, as in \cref{eq:HBB unaligned}, there is a clear dependence on the range of 
$\norm{\tf\Hb}$ as well as
a noted increased robustness with higher bath dimension. 
The light red and light blue lines in \cref{had} (left and middle) result from $N_{\rm mc}=100$ random $(h,g)$ 
coefficient samples. The corresponding effective uncertainty 
measure increases along with an increase in infidelity.
However, significantly more robustness is retained for the larger bath dimension
despite $\norm{\tf\Hb}$ varying over the same range \cref{eq:eigs HBB}. 

The example reveals the interesting phenomenon that a larger bath dimension yields more robustness. 
One possible explanation is that it takes longer for any
cumulative effects to return to the 
system, thus a slow recovery time 
with respect to the gate time.
Conversely, a small bath dimension can have a relatively fast recovery time and thus cause more disruption.

\section{Concluding Remarks and Outlook}
\label{sec:conclude}

Theorem~\ref{avgThm} settles a
long-standing question in robust quantum control: \textit{How good can
a quantum gate be if every error is ``either known or unknown but bounded''?}
By expressing the worst-case infidelity solely as a function of the
dimensionless time-bandwidth product $\tf\Omeff$, Theorem~\ref{avgThm} exposes a
fundamental property of quantum dynamics: an \textit{intrinsic}
robustness that cannot be outperformed but can, in favorable cases, be
\textit{attained}.  \Cref{lim} shows this performance limit
for a single gate; \cref{lim1} generalizes the picture across
gate durations relevant to near-term hardware.

The infidelity upper bound $1-\Flb$ is deliberately agnostic to the specific route
taken to reduce $\tf\Omeff$.  It applies whether uncertainty is
suppressed through better materials, refined fabrication, dynamical
decoupling, pulse-shaping, \etc, or any combination
thereof.  Theorem~\ref{avgThm} asks only that the \textit{nominal}
(uncertainty-free) model realizes the target unitary.  If the available
design and control degrees of freedom can eliminate (or greatly diminish)
every time-averaged interaction term that couples to
uncertainty, the gate automatically achieves the minimum infidelity
permitted by quantum mechanics under the stated assumptions.

A direct corollary of the theory is a two-objective optimization
scheme: minimize (i) the nominal infidelity and (ii) the
time-averaged error generators derived from the uncertainty model.  A
practical advantage is that \textit{both} objectives depend only on the
uncertainty-free evolution; no Monte Carlo sampling over
high-dimensional bath realizations is required during pulse search.
\Cref{sec:numex} demonstrated that this strategy can efficiently locate
controls whose observed errors are close to the theoretical bound across
numerous randomly sampled bath parameters.

The entire framework inherits a key principle from classical robust
control~\cite{Zames:1966,DesVid:1975,LMI94,ZhouDG:96}: explicit
set-based uncertainty modeling.  Once a model set is posited, the bound
in Theorem~\ref{avgThm} becomes \textit{experimentally falsifiable} in
Popper's sense~\cite{popper2002logic}.  Any measured fidelity that falls
\textit{below} $\Flb$ signals that the true system lies
outside the assumed set, thereby falsifying the model and prompting a
re-examination of device physics, control assumptions, or the bath model~\cite{SafonovT:97,KosutBDA:97}.  Conversely, repeated agreement between
experiment and bound certifies consistency with the model.

Determining a credible $\tf\Omeff$ for state-of-the-art devices will
likely require specialized identification and validation protocols
\cite{id4c:92,KosutLB:92,modval:94}.  Data-driven estimation of
uncertainty magnitudes, e.g., via randomized benchmarking,
noise spectroscopy, or recently deterministic benchmarking
experiments~\cite{tripathi2024-DB}, can feed directly into the
time-bandwidth metric without demanding a full microscopic model.

As remarked in the introduction, two paths may be taken for experimental gate performance tests. In either case, the bounding curve is used to predict expected performance, and the range of the effective uncertainty level $\tf\Omeff$ and actual infidelity $1-F$ needs to be determined from the laboratory data. At that point, the theoretical infidelity bound $1-\Flb$ can be used as a comparative performance metric.

Not every imperfection is addressable within the present bound.
Catastrophic errors, such as qubit loss or full state erasure, lie
outside the ``unknown but bounded'' paradigm and must still be handled by
quantum error correction~\cite{Wu:2022aa}.

Our results have implications for fault-tolerance thresholds. Namely, 
if experiments can quantify $\tf\Omeff$ for a particular processor, the
curve in \cref{lim} immediately reveals whether that platform's
intrinsic error floor is below the threshold demanded by fault-tolerance estimates.
Hence the bound serves as both a design target and a
benchmark: it can inform hardware engineers of the uncertainty
reduction required for large-scale quantum processors, and it provides
control theorists with an objective function whose minimization
guarantees performance improvements.

We envision several future directions.
\begin{itemize}
\item Tightness analysis: While the bound is already within
one order of magnitude of numerically observed errors, a systematic
study of tightness across higher-dimensional gates and strongly
non-Markovian environments would clarify the gap between worst-case
theory and typical performance.
\item Extension to multi-gate sequences: Folding the bound
into a whole-circuit analysis could connect single-gate robustness
directly with logical error rates, thus complementing fault-tolerance
simulations.
\item Integration with adaptive and measurement-based
control: The present work excludes measurement-based feedback; combining the
time-bandwidth metric with the latter could
further improve performance.
\item Alternative uncertainty measures: Recasting the theory
in terms of correlation functions will cover baths that violate finite-norm
assumptions.

\item Uncertainty estimation: The bound offers the framework and motivation to develop experimentally quantifiable hardware metrics that provide falsifiable benchmarks. 
\end{itemize}

In summary, Theorem~\ref{avgThm} provides a unifying perspective through which
to view robustness, control design, and experimental validation.
Combined with ongoing advances in device fabrication and pulse
optimization, it lays a quantitative foundation for closing the gap
between current noisy processors and future fault-tolerant quantum
computers.

\section*{Acknowledgments} 

For RLK and DAL, this material is based upon work supported by, or in part by, the
U. S. Army Research Laboratory and the U. S. Army Research Office
under contract/grant numbers W911NF2310307 and W911NF2310255. For RLK and
HR, support from the U.S. Department of Energy (DOE) under STTR Contract
DE-SC0020618, and HR under DOE grant DE-FG02-02ER15344 for control landscape concepts.

\appendix

\section{Fidelity and Distance}
\label{sec:Fpsimin}

\subsection{Some basic inequalities} 
Let $U$ be an arbitrary $\n\times \n$ unitary and $\ket{\psi}$ an arbitrary, normalized pure state. Set 
\beq[eq:Delta]
E\equiv U-I  \ , \quad z(\psi)\equiv\bra{\psi}U\ket{\psi}
\eeq

\begin{lem}[Worst-case fidelity lower bound]
\label{lem:1}
\beq[eq:ineq]
F(\psi)\equiv |\bra{\psi}U\ket{\psi}|
\geq
\max\bigl(1-\frac{1}{2}\norm{E}^2,0\bigr)
\eeq
where $\normsm{E}\equiv\sup_{\norm{x}_2=1} \norm{E x}_2$ denotes the induced $2$-norm.
\end{lem}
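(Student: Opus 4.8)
The plan is to exploit unitarity to recast the claim, which is quadratic in $\norm{E}$, in terms of the real part of the overlap $\bra{\psi}U\ket{\psi}$; a direct Cauchy--Schwarz estimate of $\bra{\psi}E\ket{\psi}$ only yields a linear-in-$\norm{E}$ bound, so the quadratic improvement must come from the structure $U^\dagger U = I$. First I would expand
\beq
E^\dagger E = (U^\dagger - I)(U - I) = 2I - U - U^\dagger ,
\eeq
where the cross terms combine via $U^\dagger U = I$. Taking the expectation value in the normalized state $\ket{\psi}$ gives the key identity $\norm{E\ket{\psi}}_2^2 = \bra{\psi}E^\dagger E\ket{\psi} = 2 - 2\real\bra{\psi}U\ket{\psi}$, i.e.
\beq
\real\bra{\psi}U\ket{\psi} = 1 - \frac{1}{2}\norm{E\ket{\psi}}_2^2 .
\eeq

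Next I would bound the right-hand side by the operator norm: since $\norm{\psi}_2 = 1$, the definition of the induced $2$-norm gives $\norm{E\ket{\psi}}_2 \le \norm{E}$, hence $\real\bra{\psi}U\ket{\psi} \ge 1 - \frac{1}{2}\norm{E}^2$. To finish I would invoke the elementary facts $F(\psi) = |\bra{\psi}U\ket{\psi}| \ge \real\bra{\psi}U\ket{\psi}$ and $F(\psi) \ge 0$; together they yield $F(\psi) \ge \max(\real\bra{\psi}U\ket{\psi}, 0) \ge \max(1 - \frac{1}{2}\norm{E}^2, 0)$, which is exactly \cref{eq:ineq}.

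There is no real obstacle in this argument; the single point demanding care is that the modulus, not merely the real part, sits on the left-hand side, so the trivial bound $F(\psi) \ge 0$ is needed to render the $\max$ meaningful once $\norm{E}^2 > 2$. The genuine content is the unitary identity $E^\dagger E = 2I - U - U^\dagger$, which is precisely what upgrades the linear estimate to the quadratic one and forces equality at $U = I$.
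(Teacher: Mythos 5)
Your proposal is correct and follows essentially the same route as the paper's proof: the identity $\norm{E\ket{\psi}}^2 = \bra{\psi}E^\dagger E\ket{\psi} = 2 - 2\,\mathrm{Re}\bra{\psi}U\ket{\psi}$ (from unitarity), the operator-norm bound $\norm{E\ket{\psi}} \le \norm{E}$, and the elementary inequalities $|z| \ge \mathrm{Re}(z)$ and $|z| \ge 0$ to obtain the $\max$. The only difference is that you spell out the expansion $E^\dagger E = 2I - U - U^\dagger$ explicitly, which the paper leaves implicit.
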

\begin{proof}
We have $|z|\ge \mathrm{Re}(z)$. But $\norm{E\ket{\psi}}^2 = \bra{\psi}E^\dagger E\ket{\psi} = 2-2\mathrm{Re}(z)$, i.e., $\mathrm{Re}(z) = 1-\frac{1}{2}\norm{E\ket{\psi}}^2$. Using $\norm{E\ket{\psi}}^2 \leq \norm{E}^2$, 
\cref{eq:ineq} 
follows.
\end{proof}

\begin{lem}[Average fidelity lower bound]
\label{lem:frobenius}
Define the state averaged gate fidelity
\beq[eq:F-ave-def]
\Fave\equiv \int \bigl|\bk{\psi}{U}\bigr| d\psi,
\eeq
where the integral is over the Haar measure on pure states. Then
\beq[eq:favg_bound]
\Fave \ge \max\bigl(1 - \frac{1}{2\n}\|E\|_{F}^{2},0\bigr)
\eeq
where $\|E\|_{\mathrm{F}} \equiv \sqrt{\Tr(E^\dag E)}$ denotes the Frobenius norm.
\end{lem}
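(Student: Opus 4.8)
The plan is to mirror the strategy of \cref{lem:1}, replacing the pointwise operator-norm step by an averaging argument that naturally produces the Frobenius norm. First I would discard the modulus at the cost of passing to the real part: since $|z|\ge\mathrm{Re}(z)$ for every $z\in\Cbf$, applying this inside the integral and using linearity of both the integral and the trace gives
\beq
\Fave=\int\bigl|\bk{\psi}{U}\bigr|\,d\psi\ \ge\ \mathrm{Re}\int\bk{\psi}{U}\,d\psi=\mathrm{Re}\,\Tr\!\Bigl(U\!\int\ketbra{\psi}\,d\psi\Bigr).
\eeq

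The crux of the argument is the Haar first-moment (unitary $1$-design) identity $\int\ketbra{\psi}\,d\psi=I/\n$ for the Haar measure on pure states. I would justify it by Schur's lemma: writing $M\equiv\int\ketbra{\psi}\,d\psi$, for any unitary $V$ the right-invariance of the Haar measure under the substitution $\ket{\psi}\mapsto V\ket{\psi}$ yields $VMV^{\dagger}=M$, so $M$ commutes with the entire unitary group and must be a scalar multiple of the identity; fixing the constant by taking the trace and using the normalization $\int d\psi=1$ pins it to $1/\n$. Substituting this into the previous display collapses the bound to $\Fave\ge\mathrm{Re}(\Tr U)/\n$.

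It then remains to convert $\mathrm{Re}(\Tr U)$ into the Frobenius norm of $E=U-I$. Expanding $\norm{E}_{\mathrm{F}}^{2}=\Tr\!\bigl((U-I)^{\dagger}(U-I)\bigr)$ and using unitarity $U^{\dagger}U=I$ gives $\norm{E}_{\mathrm{F}}^{2}=2\n-2\,\mathrm{Re}(\Tr U)$, i.e.\ $\mathrm{Re}(\Tr U)=\n-\tfrac12\norm{E}_{\mathrm{F}}^{2}$. Dividing by $\n$ yields $\Fave\ge 1-\tfrac{1}{2\n}\norm{E}_{\mathrm{F}}^{2}$, and since $\Fave$ is an integral of nonnegative terms it is itself nonnegative, so the bound may be strengthened to the stated $\max(\cdot,0)$.

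The only step carrying genuine content is the Haar first-moment identity; everything else is linearity together with the unitarity expansion. Accordingly, the main thing to get right is the invariance/Schur argument establishing $\int\ketbra{\psi}\,d\psi=I/\n$, after which the proof is purely algebraic and runs parallel to \cref{lem:1} with $\norm{E}$ replaced by its dimension-averaged Frobenius counterpart. This is also consistent with the norm inequality in \cref{eq:ineq2}, since $\norm{E}_{\mathrm{F}}\le\sqrt{\n}\,\norm{E}$ makes the averaged bound weaker (hence larger) than the worst-case one, as expected.
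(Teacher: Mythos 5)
Your proof is correct and follows essentially the same route as the paper's: both reduce $\Fave$ to $\mathrm{Re}(\Tr U)/\n$ via the Haar first-moment identity and then use unitarity to write $\|E\|_{\mathrm{F}}^2 = 2\n - 2\,\mathrm{Re}(\Tr U)$. The only cosmetic differences are that you apply $|z|\ge\mathrm{Re}(z)$ pointwise before integrating where the paper invokes Jensen's inequality first, and you establish $\int\ketbra{\psi}\,d\psi = I/\n$ by Schur's lemma on the operator rather than by unitary invariance of the functional $M\mapsto\int\bk{\psi}{M}\,d\psi$ --- both are the same invariance argument in different clothing.
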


\begin{proof}
Recall Jensen's inequality: for any convex function $f$ 
\beq
\mathbb{E}[f(X)]\ge f(\mathbb{E}[X])
\eeq
where $\mathbb{E}$ denotes the expectation value of the random variable $X$. In the case of interest to us, $f$ is the modulus (a convex function on $\mathbb{C}$), $\mathbb{E}$ is the Haar average, and the random variable is $\bk{\psi}{U}$. Thus, we have
\beq[]
\Fave = \int\abs{\bk{\psi}{U}}\,d\psi
   \ge
   \abs{\int \bk{\psi}{U} d\psi} 
\eeq
Now observe that 
\beq
\begin{aligned}
\Tr\left[M\int\ketbra{\psi}d\psi\right] &= \int\Tr[M\ketbra{\psi}]d\psi = \int\bk{\psi}{M} d\psi \\
&= \frac{1}{\n}\Tr(M)
\end{aligned}
\eeq
for any fixed operator $M$, where the last equality follows since the map
$M\mapsto \int\bk{\psi}{M} d\psi$ is linear and unitarily invariant.
Invariance forces the functional to be a scalar multiple of $\Tr(M)$; evaluating at $M=I$ fixes the multiple to $1/\n$. Thus, replacing $M$ with $U$, we obtain
\beq[]
\Fave \ge \frac{1}{\n}\abs{\Tr(U)} \ge \frac{1}{\n}\mathrm{Re}\Tr(U)
\eeq
On the other hand, $U$ being unitary gives $\|E\|_{F}^{2}=2\n-2\mathrm{Re}\Tr(U)$, so that, finally,
\beq[]
\Fave \ge \frac{1}{\n}(\n-\frac12\|E\|_{F}^{2}) = 1 - \frac{1}{2\n}\|E\|_{F}^{2} \equiv \Favelow
\eeq
\end{proof}

\subsection{Lower bound}

If the target unitary $\Ws$ is achieved by the nominal
(uncertainty-free) system, then from \cref{eq:FstateR} at the
final-time, fidelity only depends on the interaction-picture unitary $\Ui(\tf)$.
Stated formally as,
\beq[eq:Fid nomactR]
\begin{aligned}
&\left\{
  \beasep{1.25}{c}
  \Fnom = 1
  \\
  \mbox{equivalently}
  \\
  \Us(\tf)= \phi \Ws,\ |\phi|=1 
  \eea
\right\}
\quad\Rightarrow\\
&\left\{
\beasep{1.5}{ll}
&
F(\psin) =|\psiinc \Ui(\tf) \psiin|
\\
&
\Ui(\tf) = (\Ws\otimes \Ub(\tf))^\dag U(\tf)
\eea
\right.
\end{aligned}
\eeq
%

The basic inequalities \cref{eq:ineq} and \cref{eq:favg_bound} immediately establish the
lower-bound in \cref{eq:Fnom1}, \ie,
\beq[eq:UJfid lb]
\beasep{2.5}{ccc}
\Fwc \geq \Fwclow \equiv \max\bigl(1-\frac{1}{2}\|\Ui(\tf)-I\|^2,0\bigr)
\in[0,1]\\
\Fave \geq \Favelow \equiv \max\bigl(1-\frac{1}{2{\n}}\|\Ui(\tf)-I\|_{\mathrm{F}}^2,0\bigr)
\in[0,1]
  \eea
\eeq

A standard norm inequality between the Frobenius and induced $2$-norm is
$\|A\|_{\mathrm{F}}\leq\sqrt{\n}\|A\|$ for any operator $A$. As a result, $\Favelow \geq \Fwclow$,
as expected.
Using the eigenvalue decomposition,
\beq[eq:eigR]
\Ui(\tf) = V e^{i\tf\Om}V^\dag,\quad\Om =\diag(\om),\ \om\in\Rbf^{\n}
\eeq
and substituting into the lower bound functions in \cref{eq:UJfid lb}
gives,
\beq[eq:omFlb]
\begin{aligned}
\Fwc &\geq \Fwclow
=\max\bigl(1-\frac{1}{2}\norm{e^{i\tf\Om}-I}^2,0\bigr)\\
&
=\max\bigl(\min_{k\in\{1,\n\}}\cos(\tf\om_k),0\bigr)\\
\Fave &\geq \Favelow
=\max\bigl(1-\frac{1}{2\n}\norm{e^{i\tf\Om}-I}_{\mathrm{F}}^2,0\bigr)\\
&=\max\bigl(\frac{1}{\n}\sum_{k=1}^{\n}\cos(\tf\omega_{k}),0\bigr)\\
\end{aligned}
\eeq
where we used $|e^{ix}-1|^2 = 2(1-\cos x)$.
Comparing $\Fwc$ with the lower bound function $\Fwclow$ for fidelity
errors $1-\Fwc\in[10^{-6},10^{-2}]$ results in small relative errors
$\Fwc/\Fwclow-1\leq0.001$. This small error holds over a range of
dimensions $\n$ and various eigenvalue distributions $\om\in\Rbf^{\n}$
satisfying $\om_k\tf\leq \cos^{-1}\Fwclow$.
As shown in \cref{sec:rbst opt}, calculating $\Fwc$ or $\Fwclow$ is
needed only for evaluation, not for optimization.  Clearly $\Fwclow$
is a good approximation for $\Fwc$ in the fidelity range of interest.
To make full use of \cref{eq:UJfid lb} it remains to bound
$\|\Ui(\tf)-I\|$, the deviation from identity of the final-time
interaction-picture unitary, equivalently, the deviation of the system unitary
from the uncertainty-free ideal target. In the next section we show
how to use knowledge about the uncertainty Hamiltonian
$\Hunc(t)\in\Hcalunc$ to bound robust performance.

\subsection{Calculating worst-case fidelity}

Following \cref{eq:FstateR}, the worst-case fidelity
$\Fwc=\min_{\psin}|\psiinc A \psiin|$ with the $\n\times \n$ unitary
$A=\Ws^\dag \Us(\tf)\otimes I_B)\Ui(\tf)$, can be found from the
equivalent convex optimization,
\beq[eq:cvx rho]
\beasep{1.25}{ll}
\mbox{minimize} & |\trace(A\rho)|
\\
\mbox{subject to}& \rho \geq 0,\ \trace \rho=1
\eea
\eeq
where $\rho$ can be an arbitrary mixed state. The resulting optimal density matrix
$\rho_{\rm opt}$ determines the minimum (worst-case) fidelity as,
$\Fwc = |\trace(A\rho_{\rm opt})|$.


\section{Proof of Robust Performance Limit} 
\label{sec:avgThm}

Under the same conditions for which \cref{eq:Fnom1} and \cref{eq:UJfid lb} hold, 
the fidelity is bounded below by,
\beq
\begin{aligned}
F(\psin) &\geq \Fwc \ge \Fwclow = \max\bigl(1-\frac{1}{2}\|\Ui(\tf)-I\|^2,0\bigl)\\
&\geq \Flbw \ge 0
\end{aligned}
\eeq
provided that,
\beq[eq:Ebndtwo]
\|\Ui(\tf)-I\| \leq \sqrt{2\left(1-\Flbw\right)}
\in [0,\sqrt{2}]\ , \quad \Flbw\in [0,1]
\eeq
Similarly,
\beq[eq:Fbndave]
\Fave \ge \Favelow = \max\bigl(1-\frac{1}{2{\n}}\|\Ui(\tf)-I\|_{\mathrm{F}}^2,0\bigl)
\geq \Flba \ge 0
\eeq
provided that,
\beq[eq:Ebndtwoave]
\begin{aligned}
&\|\Ui(\tf)-I\|_{\mathrm{F}} \leq \sqrt{2\n\left(1-\Flba\right)}
\in [0,\sqrt{2\n}]\ , \\
&\qquad \Flba\in [0,1]
\end{aligned}
\eeq
$\Flbw$ and $\Flba$ are defined below, in \cref{eq:flb bnd}.

To bound the left-hand side of \cref{eq:Ebndtwo,eq:Ebndtwoave} we first apply the form of the
standard state transformation for averaging analysis described in
\cite[\S V.3]{Hale:80} and \cite{KosutBR:2022} (periodicity, usually
assumed, is not needed here).  Set,
\beq[eq:r2v]
\beasep{1.25}{rcl}
\Ui(t) &=& (I+K(t))V(t)
\\
K(t) &=& -i\int_0^t\left(\Hi(\tau)-\avg{\Hi}\right)d\tau
\eea
\eeq
with $\Hi(t)$ from \cref{eq:Gint}.
For $t\in(0,T)$, $V(t)$ is the solution of,
\beq[eq:vode]
\bea{rcl}
\dot V(t) &=& -i\Del(t)V(t),\
V(0)= I
\\
\Del(t) &=& (I+K(t))^{-1}(\Hi(t)K(t) + \avg{\Hi})
\eea
\eeq
Observe that
$K(0)=K(\tf)=0$ which implies that $V(0)=\Ui(0)= I$ and
$V(\tf)=\Ui(\tf)$. Since $V(0)= I$, deviations of $V(\tf)$ from identity determine the
limit (via the method of averaging) of robust performance. Integrating
\cref{eq:vode} gives the error for any $t\in[0,\tf]$ as,
\beq
E(t) = V(t)- I = -i\int_0^t\Del(s)ds-i\int_0^t\Del(s)E(s)ds
\eeq
Bounding the error in any fixed unitarily invariant (hence sub-multiplicative) norm $\|\cdot\|_{\rm{ui}}$ yields,
\beq[eq:Ebnd]
\norm{E(t)}_{\rm{ui}} \leq \int_0^t\norm{\Del(s)}_{\rm{ui}}ds
+ \int_0^t\norm{\Del(s)}_{\rm{ui}} \norm{E(s)}_{\rm{ui}}ds
\eeq

\begin{lem}
\label{lem:K-inv-bound}
Let $K$ be anti-Hermitian on a $\n$-dimensional Hilbert space. Then for the induced $2$-norm (maximum singular value)
\beq[eq:invopnorm]
\normsm{(I+K)^{-1}}\leq 1
\eeq
whereas for the Frobenius norm
\beq[eq:invFrob]
  \lVert (I+K)^{-1}\rVert_{F} \;\le\; \sqrt{\n} 
\eeq
Both bounds are optimal in the sense that no smaller universal upper bound holds for all anti-Hermitian $K$.
\end{lem}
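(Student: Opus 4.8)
The plan is to exploit the normality of $K$. Since $K$ is anti-Hermitian, $K^\dag=-K$, so $K$ is a normal operator whose eigenvalues are purely imaginary; write them as $i\lam_j$ with $\lam_j\in\Rbf$, $j=1,\dots,\n$. By the spectral theorem for normal operators there is a unitary $V$ with $K=V\,\diag(i\lam_1,\dots,i\lam_\n)\,V^\dag$. Then $I+K=V\,\diag(1+i\lam_j)\,V^\dag$ is again normal and invertible (no eigenvalue vanishes, since $|1+i\lam_j|\ge 1$), and $(I+K)^{-1}=V\,\diag\!\big((1+i\lam_j)^{-1}\big)\,V^\dag$. Both target norms are unitarily invariant, so conjugation by $V$ is harmless and the entire problem reduces to the diagonal matrix of entries $(1+i\lam_j)^{-1}$.

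First I would use the fact that for a normal operator the singular values coincide with the moduli of the eigenvalues. Hence the singular values of $(I+K)^{-1}$ are $|1+i\lam_j|^{-1}=(1+\lam_j^2)^{-1/2}\le 1$, and since the induced $2$-norm is the largest singular value, \cref{eq:invopnorm} follows at once. For the Frobenius norm I would sum the squares of these singular values, $\normsm{(I+K)^{-1}}_{F}^2=\sum_{j=1}^{\n}(1+\lam_j^2)^{-1}\le \sum_{j=1}^{\n}1=\n$, which gives \cref{eq:invFrob} after taking the square root. In both cases the whole argument collapses to the elementary scalar bound $(1+\lam^2)^{-1}\le 1$.

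Finally, optimality in both bounds is witnessed by the single choice $K=0$, which is (trivially) anti-Hermitian: then $(I+K)^{-1}=I$, so $\normsm{I}=1$ and $\normsm{I}_{F}=\sqrt{\n}$, saturating \cref{eq:invopnorm} and \cref{eq:invFrob} respectively. I do not anticipate a serious obstacle here. The only point genuinely requiring care is the appeal to normality of $I+K$, which is exactly what guarantees that its singular values equal the moduli of its eigenvalues; without normality one could only bound the smallest singular value of $I+K$ from below (via $\sigma_{\min}(I+K)\ge 1$ for anti-Hermitian $K$) and would have to invert that bound more delicately to control $\normsm{(I+K)^{-1}}$. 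The anti-Hermiticity is precisely the hypothesis that makes this detour unnecessary.
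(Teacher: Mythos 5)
Your proof is correct, and it is worth noting where it coincides with and where it departs from the paper's argument. For the Frobenius bound the two proofs are essentially identical: the paper writes $K=iH$ with $H$ Hermitian, diagonalizes $H$, and invokes unitary invariance to reduce to $\sum_j (1+h_j^2)^{-1}\le \n$, exactly your scalar estimate. For the induced $2$-norm, however, the paper avoids the spectral theorem altogether: it shows directly from the quadratic form $\normsm{(I+K)\ket{x}}^2=\bra{x}(I-K^2)\ket{x}\ge\normsm{\ket{x}}^2$ that $\sigma_{\min}(I+K)\ge 1$, and then inverts. Your route instead diagonalizes $K$ once, via normality of anti-Hermitian operators, and reads off both norms from the single scalar inequality $(1+\lam^2)^{-1}\le 1$; this buys uniformity and brevity, at the cost of invoking the fact that singular values of a normal matrix are the moduli of its eigenvalues (which you correctly flag as the one point needing care). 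Your optimality argument is also slightly cleaner: you observe that $K=0$ attains equality exactly in \emph{both} bounds, whereas the paper, for the $2$-norm case, exhibits the family $K=i\alf I$ and takes the limit $\alf\to 0$ --- a detour that is not needed, since the limit point $K=0$ is itself admissible. No gap anywhere; both approaches are elementary and complete.
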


\begin{proof}
Suppose $K$ is anti-Hermitian, i.e., $K^\dagger = -K$. 
Since $K$ is anti-Hermitian, we may write $K = i H$ where $H= - i K$ is Hermitian. 

\paragraph*{Induced $2$-norm:} We have $K^2 = - H^2$, hence $I - K^2 = I + H^2$,
where $H^2$ is positive-semidefinite. Therefore $I + H^2$ is also positive-semidefinite and satisfies
$\bra{x} \bigl(I + H^2\bigr) \ket{x} \ge \|\ket{x}\|^2$.
Note that for any vector $\ket{x}$ (not necessarily normalized),
\beq
\begin{aligned}
\|(I+K)\ket{x}\|^2 &=
\bra{x}(I+K)^\dagger (I+K)\ket{x}\\
&
=
\bra{x}(I-K)(I+K)\ket{x}\\
&=
\bra{x}\bigl(I - K^2\bigr)\ket{x}.
\end{aligned}
\eeq
Thus, $\|(I+K)\ket{x}\| \ge \|\ket{x}\|$ for every $\ket{x}$. This is equivalent to $\sigma_{\min}(I+K) = \inf_{|x\rangle\neq 0}\frac{\|(I+K)|x\rangle\|}{\||x\rangle\|}\ge 1$, i.e., the smallest singular value of $(I+K)$ is at least $1$. This implies that $(I+K)$ is invertible, and
\beq
\bigl\|(I+K)^{-1}\bigr\|
=
\frac{1}{\sigma_{\min}(I+K)}
 \le 1 .
\eeq

Optimality: consider $K = i \alpha I$ with real $\alpha$. Then $K$ is clearly anti-Hermitian, and $I + K
=(1 + i \alpha) I$, whose inverse is $
(I + K)^{-1}
=
\frac{1}{1+i \alpha} I$,
and
\beq
\|(I + K)^{-1}\|
=
\frac{1}{|1+i \alpha|}
=
\frac{1}{\sqrt{1+\alpha^2}}
 \le 1.
\eeq
As $\alpha \to 0$, the quantity $\|(I+K)^{-1}\|$ approaches $1$. Since the bound must hold for any $K$, we conclude that $\|(I+K)^{-1}\|\le 1$ is sharp.

\paragraph*{Frobenius norm:} Diagonalize 
\beq
H = V \operatorname{diag}(h_{1},\dots,h_{\n}) V^{\dagger},\quad h_{j}\in\mathbb R
\eeq    
Unitary invariance of the Frobenius norm gives
\beq[]
  \norm{(I+K)^{-1}}_{F}^{2}
    = \sum_{j=1}^{\n} \frac{1}{\abs{1 + i h_{j}}^{2}}
    = \sum_{j=1}^{\n} \frac{1}{1 + h_{j}^{2}}
    \le \sum_{j=1}^{\n} 1
    = d 
\eeq
Taking square roots yields \cref{eq:invFrob}.

Optimality: choose $K = 0$.  Then $(I+K)^{-1}=I$ and 
$\lVert I\rVert_{F} = \sqrt{\Tr(I)} = \sqrt{\n}$, saturating
the bound.  Hence $\sqrt{\n}$ is the smallest constant valid for every
anti-Hermitian $K$.

\end{proof}

Using \cref{eq:vode,eq:invopnorm} we now have, in the worst-case setting,
\beq[eq:Dbnd1]
\normsm{\Del(t)} \leq \|\Hi(t)\|\normsm{K(t)}+\|\avg{\Hi}\|
\eeq

For the average-case setting, we note that since the Frobenius norm is the $\ell_2$-norm of the singular values, we have $\norm{A}_{\mathrm{F}} \le \sqrt{r} \norm{A}$ where $r=\rank(A)$. Therefore, $\norm{AB}_{\mathrm{F}} \le \min\bigl(\sqrt{\rank(A)},\sqrt{\rank(B)}\bigr) \norm{A}\norm{B}$, and
\beq[eq:Dbnd1-F]
\normsm{\Del(t)}_{\mathrm{F}} \leq \kappa \bigl(\|\Hi(t)\|\normsm{K(t)}+\|\avg{\Hi}\|\bigr)
\eeq
where
\beq
\kappa^2 = \min\bigl(\rank[(I+K(t))^{-1}],\rank[\Hi(t)K(t) + \avg{\Hi}]\bigr)
\eeq
It may be difficult to estimate $\kappa$ in practice. However, we can always use the looser bound $\kappa\le \n$, which is also what we obtain from \cref{eq:invFrob}. In this case, we have
\beq[eq:Dbnd1-Fd]
\normsm{\Del(t)}_{\mathrm{F}} \leq \sqrt{\n} \bigl(\|\Hi(t)\|\normsm{K(t)}+\|\avg{\Hi}\|\bigr)
\eeq

Using the bounds defined in \cref{eq:Omunc def},
\beq[eq:Om bnd]
\|\Hi(t)\| \leq \Ommax,
\quad
\|\avg{\Hi}\| \leq \Omavg,
\quad
\|\Hi(t)-\avg{\Hi}\| \leq \dOmavg
\eeq
\cref{eq:Dbnd1,eq:Dbnd1-F} can be written as,
\beq[eq:Dbnd2]
\begin{aligned}
\normsm{\Del(t)} &\leq \bigl(
\Ommax~\normsm{K(t)}+\Omavg \bigr)\\ 
\normsm{\Del(t)}_{\mathrm{F}} &\leq \kappa\bigl(
\Ommax~\normsm{K(t)}+\Omavg \bigr)
\end{aligned}
\eeq
A bound on $K(t)$ can be found in two ways. First,
\beq[eq:Kbnd1]
\norm{K(t)} = \norm{\int_0^t (\Hi(s)-\avg{\Hi})ds}
\leq \dOmavg~t
\eeq
Second, replace $\int_0^t(\Hi(s)-\avg{\Hi})ds$ with
$\int_0^T(\Hi(s)-\avg{\Hi})ds-\int_t^T(\Hi(s)-\avg{\Hi})ds$. Since the
first of these terms is zero, the bound is then,
\beq[eq:Kbnd2]
\beasep{1.25}{rcl}
\normsm{K(t)}
&\leq& \dOmavg(\tf-t)
\eea
\eeq
Altogether, using the minimum bound on $\norm{K(t)}$ for
$t\in[0,\tf]$,
\beq[eq:Kbnd]
\norm{K(t)} \leq \dOmavg~\bet(t),
\quad\bet(t)=
\left\{
\bea{ll}
t & t < \tf/2
\\
\tf-t & t > \tf/2
\eea
\right.
\eeq
Combining with \cref{eq:Dbnd2},
\beq[eq:Dbnd]
\normsm{\Del(t)} \leq \del(t) \equiv \Ommax~\dOmavg~\bet(t)+\Omavg\ , \quad 
\normsm{\Del(t)}_{\mathrm{F}} \leq \kappa \del(t) 
\eeq
Then \cref{eq:Ebnd} becomes,
\beq[eq:Ebnd1]
\norm{E(t)} \leq c(t) + \int_0^t \dot c(s)\norm{E(s)}ds
\quad
\left\{
\beasep{1.25}{l}
c(t) = \int_0^t \del(s)ds
\\
\dot c(t) = \del(t)
\eea
\right.
\eeq
and
\beq[eq:Ebnd1-F]
\norm{E(t)}_{\mathrm{F}} \leq \kappa\Big(c(t) + \int_0^t \dot c(s)\norm{E(s)}_{\mathrm{F}} ds\Big)
\eeq
Since $c(0)=0$, we can
use the version of the Bellman-Gronwall Lemma in \cite{CoddLev:55}
which gives the bound,
\beq[eq:Ebnd2]
\begin{aligned}
\norm{E(t)} &\leq \int_0^t\dot c(s)
~\exp{\int_s^t \dot c(\tau)d\tau}\ ds\\
\norm{E(t)}_{\mathrm{F}} &\leq \kappa \int_0^t\dot c(s)
~\exp{\kappa \int_s^t \dot c(\tau)d\tau}\ ds
\end{aligned}
\eeq
Performing the indicated integrations evaluated at $t=\tf$ and using
$V(\tf)=\Ui(\tf)$,
\beq[eq:Ebnd3]
\beasep{1.5}{l}
\norm{E(\tf)} = \norm{V(\tf)- I}=\|\Ui(\tf)- I\|
\leq e^{c(\tf)}-1
\\
\norm{E(\tf)}_{\mathrm{F}} = \norm{V(\tf)- I}_{\mathrm{F}}=\|\Ui(\tf)- I\|_{\mathrm{F}}
\leq e^{\kappa c(\tf)}-1
\\
c(\tf) = \tf\Omavg + (\tf\Ommax)(\tf\dOmavg)/4
\eea
\eeq
To ensure \cref{eq:Ebndtwo,eq:Ebndtwoave} hold requires that,
\beq[eq:expc bnd] 
\begin{aligned}
e^{c(\tf)}-1 &=
\sqrt{2\left(1-\Flbw\right)} \\
e^{\kappa c(\tf)}-1 &=
\sqrt{2\n\left(1-\Flba\right)}
\end{aligned}
\eeq
or equivalently,
\beq[eq:flb bnd]
\begin{aligned}
\Flbw &= \max\left(1-\frac{1}{2}\bigl(e^{c(\tf)}-1\bigr)^2,0\right)\\
\Flba &= \max\left(1-\frac{1}{2{\n}}\bigl(e^{\kappa c(\tf)}-1\bigr)^2,0\right)
\end{aligned}
\eeq
Rearranging terms gives, for the worst-case 
\beq[eq:Tdelbnd]
c(\tf) 
=
\ln\left(
1+\sqrt{
2\left(
1-\Flbw
\right)
}
\right)
\eeq
and for the average case
\beq[eq:Tdelbnd-F]
c(\tf)
=
\frac{1}{\kappa}\ln\left(1+\sqrt{2\n\left(1-\Flba\right)}\right)
\eeq
When the interaction-picture Hamiltonian time-average $\Omavg=\avg{\Hi} \neq 0$,
then the limit bound can be expressed in a variety of ways, for example,
as in Theorem~\ref{avgThm}, 
\beq[eq:Tdelbnd_1]
\tf\Omeff \equiv \sqrt{(\tf\Ommax)(\tf\dOmavg) + 4\tf~\Omavg} 
= 2\sqrt{c(T)}
\eeq
Since 
$\Flbw\in[0,1]$,
$\tf\Omeff$ is maximized when $\Flbw=0$.
Thus, in the worst-case setting
\beq[eq:TOmeff uppbnd]
0 \leq \tf\Omeff 
\leq
2\sqrt{
\ln
\left(
1+\sqrt{2}
\right)
}
=1.8776 \mbox{ radians}
\eeq
For example, with a gate time of $\tf=50$ nsec, $\Omeff \leq 37.55$ Mhz.
When the interaction-picture Hamiltonian time-average $\Omavg=\avg{\Hi}=0$,
then $\dOmavg=\Ommax$ and the limit bound becomes
$\tf\Omeff=\tf\Ommax$. 

In the average-case setting, on the other hand,
\beq[eq:TOmeff uppbnd-F]
0 \leq \tf\Omeff 
\leq
2\sqrt{\frac{1}{\kappa}\ln\left(1+\sqrt{2\n}\right)}
\approx 2\sqrt{\frac{n+1}{2\kappa}}
\eeq
for $\n=2^n$ in the case of a system of $n$ qubits. Note that the RHS approaches zero if $\kappa$ scales faster than $O(n)$, which is expected for most Hamiltonians. This points to a problem with the Frobenius norm bound. Evidently, explicitly bounding $\|\Delta\|_{\mathrm{F}}$ using the inequality in \cref{eq:Dbnd1-Fd}, which relates $\|\Delta\|_{\mathrm{F}}$ to the $2$-norm of the various Hamiltonians, makes the Frobenius norm bound too loose. We leave it as an open problem to tighten the Frobenius norm lower bound. 

Another way to state the problem, which is clear by comparing $\Flbw$ and $\Flba$ in \cref{eq:flb bnd}, is that $\Flbw\ge\Flba$ except for $\kappa=1$, the opposite of the expected ordering.
This means that our lower bound on the average fidelity is far from tight.
However, since by definition $\Fave \ge \Fwc$, and $\Fwc \ge \Flbw$, we can simply replace $\Flba$ by $\Flbw$, which is what we did in the statement of Theorem~\ref{avgThm}, while renaming $\Flbw$ as $\Flb$.

\section{Extensions to Uncertainty Model Framework}
\label{sec:extend}

\subsection{Summary}
A few extensions are briefly discussed which fit the uncertainty model
framework where each has a similar structure and resulting robust performance limit bounds: 
(i) Lindblad, (ii) ancilla,
(iii) multilevel systems, and (iv) crosstalk.
With some modifications,
the theoretical framework and performance bound Theorem~\ref{avgThm} can be
extended unchanged except for computing the time-bandwidth uncertainty
bound \cref{eq:Omunc def}.
In general for these latter three, the total dimension $\n$ 
defined in \cref{eq:SBsys} depends on what is
labeled there as ``system'' and ``bath.'' 
For the basic bipartite system $\n=\ns\nb$. 
Ancilla states of dimension $\na$ are typically added in a product state 
with both the system and bath, hence, $\n=\ns\na\nb$.  
For multilevel systems
with $\ne$ extra levels, $\n=(\ns+\ne)\nb$. 
In many implementations, other qubits, supposedly idle, in fact
cause unwanted interactions just by their proximity to the ``active''
qubits performing the required sequential logical operation. 
Referred to as ``crosstalk,'' 
the total dimension should include 
a sufficient number of the neighboring
controlled quantum states running in parallel during the operation
time. Thus the ``B''-system dimension \cref{eq:SBsys} is not just the
bath, but also the interference induced by these $\nq$ neighboring states,
resulting in $\n=\ns\nq\nb$. 

\subsection{Lindblad master equation}

As previously noted, the induced norm of bosonic bath Hamiltonians
diverges with bath dimension, \eg, for $\Balf(t)$ from \cref{eq:Hsbalf}, 
$\norm{\Balf(t)}\to\infty$ as $\nb\to\infty$. 
As argued, \eg, in \cite{DDPRA:2011}, this requires a different measure of uncertainty, 
\eg, based on input-state-dependent correlation functions. 
The Lindblad master equation, under suitable conditions, very well describes 
open system non-unitary evolution
in terms of rates computable using correlation functions~\cite{Breuer:book,rivas_open_2012,Majenz:2013qw}. Its range of validity is nevertheless restricted by the convergence of time-dependent perturbation theory, which is usually prescribed in terms of diverging quantities such as $\norm{\Balf(t)}$~\cite{Mozgunov:2019aa}. Therefore, the extension we present in this should \emph{not} be perceived as a complete solution to the problem of diverging operator norms.

Starting from \cref{eq:SBsys} and tracing out the bath, the $\ns\times\ns$ system density
matrix is,
\beq[eq:rhos]
\rho_S(t)=\Tr_B(\ketb{\psi(\tf)}{\psi(\tf)}),\ t\in[0,\tf]
\eeq
Under the assumption that the initial state is decoupled from the
bath, \ie, $\ket{\psi(0)}=\ket{\psi_S(0)}\otimes\ket{\psi_B(0)}$, the general
differential Lindblad form is,
\beq[eq:lind]
\beasep{1.5}{rl}
\dot\rho_S(t) &= -i[\bar{H}_S(t),\rho_S(t)]
+ \Lcal(\rho_S(t))
\\
\Lcal(\rho_S) &= \sum_{\ell=1}^{m} \gam_\ell \Lcal_{\ell}(\rho_S)
\\
\Lcal_\ell(\rho_S) &=
L_\ell\rho_S L_\ell^\dag-
\frac{1}{2}\left\{L_\ell^\dag L_\ell,\rho_S\right\}
\eea
\eeq
with $\bar{H}_S(t)=\Hs(t)+\Hscoh(t)$ as defined in \cref{eq:Ham}, but with $\Hscoh(t)$ including a component induced by the system-bath coupling known as the Lamb shift~\cite{Lidar200135,Majenz:2013qw}.
Here we have assumed that the Lindblad operators $L_\ell$ are constant; 
in general, they could be time-varying. When the rates $\gam_\ell$ are all nonnegative, \cref{eq:lind} is known as the Lindblad equation, and it describes Markovian dynamics. Otherwise, \cref{eq:lind} is a general quantum master equation that can describe non-Markovian dynamics~\cite{Rivas:2014aa}.
The limit bound Theorem~\ref{avgThm} encompasses the Lindblad form
by lifting the density matrix to the $d_S^2$-dimensional vector
$\vec\rho_S(t)$ .
The lifted (or ``vectorized'') state evolution version of
\cref{eq:lind} is governed by,
\beq[eq:vec rho evo]
\dot{\vec{\rho}}_S(t) = (-iA(t)+D)\vec{\rho}_S(t)
\eeq
with $d_S^2\times d_S^2$-dimensional matrices $A(t)$ and $D$ given by,
\beq[eq:mat rho evo]
\beasep{1.25}{ccl}
A(t) &=& I_S\otimes \bar{H}_S(t)-\bar{H}_S(t)^T\otimes I_S
\\
D &=& \sum_{\ell=1}^{m} \gam_{\ell}D_\ell
\\
D_\ell &=& L_{\ell}^*\otimes L_{\ell}
-\frac{1}{2}\left(
I_S\otimes L_{\ell}^\dag L_{\ell}
+(L_{\ell}^\dag L_{\ell})^T\otimes I_S
\right)
\eea
\eeq
Define the $d_S^2\times d_S^2$-dimensional interaction matrix $V(t)$ via the
lifted state $\vec\rho(t)$ as,
\beq[eq:R Phi]
\beasep{1.25}{rcl}
\vec\rho(t) &=& \Phi_S(t)V(t)\vec\rho_0
\\
\Phi_S(t) &=& \Us(t)^*\otimes \Us(t)
\eea
\eeq
with uncertainty-free unitary $\Us(t)$ from \cref{eq:Usnom} and $V(t)$ from,
\beq[eq:R evo]
\beasep{1.25}{rcl}
\dot V(t) &=&
\Big(\sum_{\ell=1}^m\gam_\ell G_\ell(t)\Big)V(t),\ V(0) = I_{d_S^2}
\\
G_\ell(t) &=& \Phi_S(t)^\dag D_\ell \Phi_S(t)
\\
&=& 
\left(\Us(t)L_\ell\Us(t)\right)^*\otimes \left(\Us(t)L_\ell\Us(t)\right)
\\
&& -\frac{1}{2}\left(
I_S\otimes\left(\Us(t)^\dag(L_\ell^\dag L_\ell)\Us(t)
\right)^T \right. \\
&&\left. + \Us(t)^\dag(L_\ell^\dag L_\ell)\Us(t) \otimes I_S
\right)
\eea
\eeq
%

If there is sufficient control to make the time-averages of the
coherent error $\avg{\Hscoh}=0$ and the Lindblad terms
$\avg{G_\ell}=0,\forall\ell$, then the robust performance limit from
Theorem~\ref{avgThm} would correspond to the smallest intrinsic time-bandwidth
uncertainty error bound, \ie, $\tf\Omunc$.
Though the Lindblad form captures open-system behavior, the starting assumption is that the 
initial system-bath state is factorized. This is highly unlikely to be the case, but nevertheless, 
we can consider the Lindblad form to be a nominal model of the system through which a control can be designed.
If a control based on the Lindblad model produces a sufficiently small predicted time-bandwidth 
uncertainly level $\tf\Omeff$, then it is possible that unknown uncertainties can be withstood, 
including initial state coupling errors.

\subsection{Ancilla}

The link to error correction requires ancilla qubits, resulting in the
following modification of the bipartite system block diagram \cref{eq:SBsys}
to the \emph{tripartite} system 
\beq[eq:SABsys]
\bea{ll}  
\bea{rcl}
\bea{cc}
S,A & \vlongrightarrow{0.5in}
\\
& \ket{\psi(0)}
\\
B & \vlongrightarrow{0.5in}
\eea
&
\hspace{-2.3ex}
\mathbox{
  \bea{ccc}
  &&\\&U(t)&\\&&
  \eea
}
&
\hspace{-2.3ex}
\bea{cc}
\vlongrightarrow{0.5in} & {S,A}
\\
\ket{\psi(t)} & 
\\
\vlongrightarrow{0.5in} & {B}
\eea
\eea
\eea
  \eeq
There are now three types of states: $\ns$ system states, $\na$
ancilla states, and $\nb$ bath states with the total Hamiltonian,
\beq[eq:Ham sab]
H(t) = H_{SA}(t)\otimes\Ib
+\Is\otimes\Ia\otimes\Hb + H_{SAB}
\eeq
The uncertainty-free (nominal) system-ancilla (SA) Hamiltonian is, 
\beq[eq:Ham sa]
H^{\rm nom}_{SA}(t) = \Hs(t)\otimes\Ia + \Is\otimes\Ha(t)
+\sumalf S'_\alf\otimes A'_\alf
\eeq
with associated SA system coherent errors,
\beq[eq:Ham sa coh]
H^{\rm coh}_{SA}(t) = \Hscoh(t)\otimes\Ia + \Is\otimes\Hacoh(t)
+\sumalf \eps_\alf S'_\alf\otimes A'_\alf
\eeq
and where coupling of SA states to the bath is given by,
\beq[eq:Ham sb ab]
H_{SAB} = \sum_\bet S_\bet\otimes\Ia\otimes B_\bet
+ \sum_\gam \Is\otimes A_\gam\otimes B_\gam
\eeq

\subsection{Multilevel systems}

Extra levels that are excluded from the basic model are easily accounted for, \eg, 
a qutrit as the system and then an extra level that is excluded. 
The first step is to express the total system Hamiltonian as,
\beq[eq:Hmult]
\beasep{1,25}{rcl}
H(t) &=& \Hm(t)\otimes\Ib + I_{\cal M}\otimes\Hb + H_{\Mcal B}
\\
\Hm(t) &=& \mat{H_S(t)&H_{SE}\\H_{SE}^\dag&H_E},
\\
H_{\Mcal B} &=& \sumalf M_\alf\otimes\Balf
\eea
\eeq
Here $\Hm(t)$ is the multilevel Hamiltonian of dimension $\ns+\ne$
where $S$ denotes the $\ns$ system states which carry the information,
and $E$ denotes the $\ne$ extra (multi) levels, \eg, $\ne=1$ for a
qutrit when the system is a qubit. The bath is again denoted by $B$ with $\nb$ bath
states. The total system dimension is $n=(\ns+\ne)\nb$. 

To illustrate the modeling procedure, assume that $\Hs(t)$ is
uncertainty-free and with known time-variations due to the control
fields (coherent errors are easily added).  The remaining
Hamiltonians are assumed to be constant and uncertain.  Following
\cref{eq:Rint}, define the interaction-picture unitary $\Ui(t)$ with $\dot U(t)
= -iH(t)U(t)$ via,
\beq[eq:Rint mult]
\beasep{1.25}{l}
U(t) = \left(U_\Mcal(t) \otimes \Ub(t)\right)\Ui(t)
\\
U_\Mcal(t) =
\mat{U_S(t)&U_{SE}(t)\\U_{ES}(t)&{U}_E(t)}
\eea
\eeq
where $\dot U_{\cal M}(t)=-i\Hm(t)U_{\cal M}(t)$.
Under these conditions, the interaction-picture unitary evolution and
interaction-picture Hamiltonian are,
\beq[eq:Rint mult evo]
\beasep{1.25}{rcl}
\dot \Ui(t) &=& -i\Hi_{\Mcal B}(t)\Ui(t)
\\ 
\Hi_{\Mcal B} &=& \sumalf \Hi_\Mcal^\alf(t) \otimes \Hi_B^\alf(t)
\\
\Hi_\Mcal^\alf(t) &=& U_\Mcal(t)^\dag M_\alf U_\Mcal(t)
\\
\Hi_B^\alf(t) &=& \Ub(t)^\dag B_\alf \Ub(t)
\eea
\eeq
These interaction-picture Hamiltonians have the same form as in \cref{eq:Gint}.
To maximize fidelity to achieve a target $W_S$ in the system,
despite uncertainties, we ensure that $\Fnom=1$ ($\Us(\tf)=W_S$) and
simultaneously minimize the time-averaged terms involving the
controlled unitary $\Us(t)$ using reduced-order models of the
uncertain terms in the multilevel interaction-picture Hamiltonian as well as
the bath terms. With sufficient control resources, the time-bandwidth
uncertainty then only depends on the intrinsic (multilevel)
system-bath coupling bound,
\beq[eq:TOmunc mlev]
\beasep{1.25}{rl}
\tf\Omunc \geq 
& \sumalf \norm{M_\alf} \norm{\tf B_\alf}
\eea
\eeq
%
\subsection{Crosstalk}
Unwanted interactions can occur within the system, the latter being nullified (ideally) by
control; see, \eg,~\cite{tripathi2021suppression,Zeyuan:22,brown2024efficient}. Conventionally, the system is divided into ``main'' and ``spectator'' qubits, with the former performing the computation in a $\ns$-dimensional Hilbert space while the latter occupy a $\nq$-dimensional Hilbert space and represent the unwanted coupled states. In this case, the total dimension should include not only the
bath but \emph{all} the spectator states present during the operation time.  Thus the ``B''-system
dimension [\cref{eq:SBsys}] is not just the bath, but also the
crosstalk induced by these unwanted interactions, 
resulting in a total dimension $d=\ns \nq \nb$.  The
spectator qubits can be considered as part of the uncertain environment.

The Hamiltonian structure is similar to that of the multilevel system
\cref{eq:Hmult} where now $H_{\Xcal}(t)$ replaces $H_{\cal M}(t)$
resulting in,
\beq[eq:Hcross]
\beasep{1.25}{rcl}
H(t) &=& H_{\Xcal}(t)\otimes\Ib + I_{\Xcal}\otimes\Hb +H_{{\Xcal}B}
\\
H_{\Xcal}(t) &=& \Hs(t)\otimes I_Q + \Is\otimes H_Q(t)
\\
H_{{\Xcal}B} &=& \sumalf X_\alf \otimes \Balf
\eea
\eeq
Again following \cref{eq:Rint}, define the interaction-picture unitary $\Ui(t)$
with $\dot U(t) = -iH(t)U(t)$ via,
\beq[eq:Rint cross]
\beasep{1.25}{l}
U(t) = \left(U_\Xcal(t) \otimes \Ub(t)\right)\Ui(t)
\\
U_\Xcal(t) = \Us(t)\otimes U_Q(t)
\eea
\eeq
Clearly, the robustness limit bound still applies with a
redefinition of the minimum possible time-bandwidth uncertainty bound,
\ie, the horizontal axis in \cref{lim}.  Specifically, if the
nominal fidelity $\Fnom=1$, then $\Us(\tf)=\phi_S\Ws,|\phi_S|=1$,
$U_Q(\tf)=\phi_QI_Q,|\Phi_Q|=1$. As a result
$U_\Xcal(\tf)=\phi_S\phi_Q\Ws\otimes I_Q$.  The minimum possible
time-bandwidth uncertainty bound is then,
\beq[eq:TOmunc cross]
\beasep{1.25}{rl}
\tf\Omunc \geq 
& \sumalf \norm{X_\alf} \norm{\tf B_\alf}
\eea
\eeq
\section{Bound for a general $W_B$}
\label{sec:Fnuc}

\begin{figure}[h]
    \centering
    \includegraphics[width=3.5in]{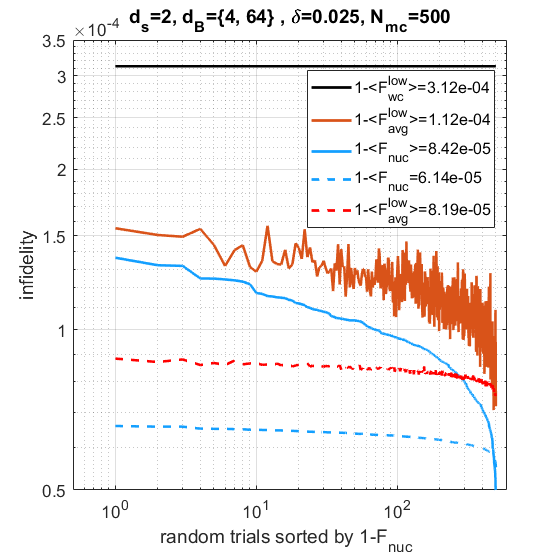}
    \caption{The plots compute infidelity bounds comparing the limit $1-\Flb$ from \cref{eq:Fbnd}, $1-\Fwc$ from \cref{eq:Fwc}, and 
$1-\Fnuc$ from \cref{eq:PhiB}. The bounds shown are for $\Ui=\exp{i\del H}$ with 
$\del=0.025$ for $N_{mc}=500$ random normalized 
$H,\normtwo{H}=1$ for two different bath dimensions, $\nb=\{4,64\}$ with {\bf black} for the limit bound $1-\Flb$, red for $1-\Favelow$, blue for $1-\Fnuc$ 
with solid lines for $\nb=4$ and dashed lines for $\nb=64$.}
    \label{nuc}
\end{figure}

Instead of comparing the final-time unitary to $\Ws\otimes\Ub(\tf)$, replace the final-bath unitary 
$\Ub(\tf)$ with the $\nb\times\nb $ unitary $W_B$, a free variable. Now 
define the error as $\norm{U(\tf)-\Ws\otimes W_B}$.
Using the final-time interaction transformation
$U(\tf)=\left( \Us(\tf)\otimes\Ub(\tf)\right)\Ui(T)$ together with $\Fnom=1$, \ie, 
$\Us(\tf)=\phi\Ws,|\phi|=1$, gives the error as,
\beq[eq:err Phib]
\begin{aligned}
\norm{U(\tf)-\Ws\otimes W_B} &= \norm{\Ui(\tf)-\Is\otimes\Phi_B}\\
\Phi_B &= \phi^*\Ub(T)^\dag W_B
\end{aligned}
\eeq
Following \cite{GDKBH:10}, for any $d\times d$ final-time interaction unitary
where $\Ui \equiv \Ui(\tf)$ with $d=\ns\nb$,
\beq[eq:PhiB]
\begin{aligned}
&\ {\ds\min_{\PhiB}} \normfro{\Ui-\Is\otimes\PhiB}^2 = 2d\left( 1 - \Fnuc \right)\\
\Fnuc &= 1 - (1/2d) {\ds\min_{\PhiB}} \normfro{\Ui-\Is\otimes\PhiB}^2 = \nucnorm{\Gam/d}\\
& \equiv (1/d)\sum_{i=1}^\nb {\rm sv}_i(\Gam), \quad 
\Gam = \sum_{i=1}^\ns \Ui_{[ii]}
\end{aligned}
\eeq
where $\Ui_{[ii]}$ are $\nb\times\nb$ submatrices of $\Ui$ along the block diagonal, and ${\rm sv}_i(\Gam)$ denotes the singular values of $\Gam$.
The minimizer $\PhiB^{\rm opt}$ is obtained from the SVD of $\Gam$, 
\beq[eq:Gam2PhiB]
\Gam = V_L 
\diag\left[{\rm sv}_1(\Gam),\cdots{\rm sv}_\nb(\Gam)\right] 
V_R^\dag
\quad\Rightarrow\quad
\PhiB^{\rm opt} = V_LV_R^\dag
\eeq 
Since $\normfro{\Ui-\Is\otimes\PhiB^{\rm opt}}\leq \normfro{\Ui-I}$,
it follows that,
\beq[eq:Fnuc comp]
\Fnuc \ge \Favelow \ge \Fwclow
\eeq
\cref{nuc} shows two numerical examples showing 
the limit infidelity bound $1-\Flb$ [$\Flb$ 
from \cref{eq:Fbnd}] which bounds $1-\Fwc$ [$\Fwc$ from \cref{eq:Fwc}]
also over-bounds $1-\Fnuc$. 


\bibliography{bib}  

\end{document}